\newtheorem{theorem}{\bf Theorem}
\newtheorem{proposition}{\bf Proposition}
\newtheorem{lemma}{\bf Lemma}
\newcounter{step}
\newlength{\totlinewidth}
\newenvironment{algorithm}{%
  \rule{\linewidth}{1pt}
  \begin{list}{}%
    {\usecounter{step}%
      \settowidth{\labelwidth}{\textbf{Step 2:}}%
      \setlength{\leftmargin}{\labelwidth}%
      \setlength{\topsep}{-2pt}%
      \addtolength{\leftmargin}{\labelsep}%
      \addtolength{\leftmargin}{2mm}%
      \setlength{\rightmargin}{2mm}%
      \setlength{\totlinewidth}{\linewidth}%
      \addtolength{\totlinewidth}{\leftmargin}%
      \addtolength{\totlinewidth}{\rightmargin}%
      \setlength{\parsep}{0mm}%
      \raggedright}}%
  {\end{list}%
  \rule{\linewidth}{1pt}}
\newcounter{substep}
\newlength{\aligntop}
\newlength{\alignbot}
\renewenvironment{align}{%
  \vspace{\aligntop}
  \start@align\@ne\st@rredfalse\m@ne
}{%
  \math@cr \black@\totwidth@
  \egroup
  \ifingather@
    \restorealignstate@
    \egroup
    \nonumber
    \ifnum0=`{\fi\iffalse}\fi
  \else
    $$%
  \fi
  \ignorespacesafterend%
  \vspace{\alignbot}\par\noindent
} \makeatother
\newcommand\semihuge{\@setfontsize\semihuge{19.3}{25}}
\newcommand\semismall{\@setfontsize\semihuge{12.4}{15}}
\begin{document}

\title{\huge A Joint Learning and Communications Framework for Federated Learning over Wireless Networks}

\author{{Mingzhe Chen}, \emph{Member, IEEE}, Zhaohui Yang, \emph{Member, IEEE}, Walid Saad, \emph{Fellow, IEEE}, Changchuan Yin, \emph{Senior Member, IEEE}, H. Vincent Poor, \emph{Life Fellow, IEEE}, and Shuguang Cui, \emph{Fellow, IEEE} \vspace*{-2.2em}\\ 
\thanks{M. Chen and H. V. Poor are with the Department of Electrical and Computer Engineering, Princeton University, Princeton, NJ, 08544, USA, Emails: \protect{mingzhec@princeton.edu}, \protect{poor@princeton.edu}.}
\thanks{Z. Yang is with the Department of Engineering, King's College London, WC2R 2LS, UK, Email: \protect{yang.zhaohui@kcl.ac.uk}.}
\thanks{W. Saad is with the Wireless@VT, Bradley Department of Electrical and Computer Engineering, Virginia Tech, Blacksburg, VA, 24060, USA, Email: \protect{walids@vt.edu}.}
\thanks{C. Yin is with the Beijing Key Laboratory of Network System Architecture and Convergence,
Beijing University of Posts and Telecommunications, Beijing, 100876, China, Email: \protect{ccyin@ieee.org}.}
\thanks{S. Cui is with the Shenzhen Research Institute of Big Data and Future Network of Intelligence Institute, the Chinese University of Hong Kong, Shenzhen, 518172, China, Email: \protect{shuguangcui@cuhk.edu.cn}} 
\thanks{{The work was in part by the U.S. National Science Foundation under Grant CCF-1908308, and the U.S. Office of Naval Research under Grant N00014-15-1-2709.}}
 }
\maketitle
%
\begin{abstract}
In this paper, the problem of training federated learning (FL) algorithms over a realistic wireless network is studied. In the considered model, wireless users execute an FL algorithm while training their local FL models using their own data and transmitting the trained local FL models to a base station (BS) that generates a global FL model and sends the model back to the users. Since all training parameters are transmitted over wireless links, the quality of training is affected by wireless factors such as packet errors and the availability of wireless resources. 
 Meanwhile, due to the limited wireless bandwidth, the BS needs to select an appropriate subset of users to execute the FL algorithm so as to build a global FL model accurately.    
This joint learning, wireless resource allocation, and user selection problem is formulated as an optimization problem whose goal is to minimize an FL loss function that captures the performance of the FL algorithm.
 To seek the solution, a closed-form expression for the expected convergence rate of the FL algorithm is first derived to quantify the impact of wireless factors on FL. 
 Then, based on the expected convergence rate of the FL algorithm, the optimal transmit power for each user is derived, under a given user selection and uplink resource block (RB) allocation scheme. Finally, the user selection and uplink RB allocation is optimized so as to minimize the FL loss function.
Simulation results show that the proposed joint federated learning and communication framework
 can improve the identification accuracy by up to {\color{black}$1.4\%$, $3.5\%$ and $4.1\%$}, respectively, compared to: 1) An optimal user selection algorithm with random resource allocation, 2) a standard FL algorithm with random user selection and resource allocation, and {\color{black}3) a wireless optimization algorithm that minimizes the sum packet error rates of all users while being agnostic to the FL parameters.}
\end{abstract}

{\renewcommand{\thefootnote}{\fnsymbol{footnote}}
\footnotetext{A preliminary version of this work \cite{chen2019FLwireless} appears in the Proceedings of the 2019 IEEE Global Communications Conference.}}
%

\section{Introduction}
Standard machine learning approaches require centralizing the training data in a data center or a cloud \cite{chen2017machine,sun2018application,liu2019machine}. However, due to privacy concerns and limited communication resources for data transmission, it is undesirable for all users engaged in learning to transmit all of their collected data to a data center or a cloud. This, in turn, motivates the development of distributed learning frameworks that allow devices to use individually collected data to train a learning model locally.  
One of the most promising of such distributed learning frameworks is federated learning (FL) developed in \cite{bonawitz2019towards}. FL is a distributed machine learning method that enables users to collaboratively learn a shared prediction model while keeping their collected data on their devices \cite{NIPS2017_7029,8770530,saad2019vision,jeong2019multi,yang2019energy}. However, to train an FL algorithm over a wireless network, the users must transmit the training parameters over wireless links which can introduce training errors, due to the limited wireless resources (e.g., bandwidth) and the inherent unreliability of wireless links.
 \subsection{Related Works}
   

Recently, a number of existing works such as \cite{bonawitz2019towards} and\cite{li2019federated,konevcny2016federated,mcmahan2016communication,chen2018federated,konevcny2015federated,samarakoon2018distributed,ha2019coded,habachi2019fast,park2018wireless,zeng2019energy,8664630} have studied important problems related to the implementation of FL over wireless networks. The works in \cite{bonawitz2019towards} and \cite{li2019federated} provided a comprehensive survey on the design of FL algorithms and introduced various challenges, problems, and solutions for enhancing FL effectiveness. In \cite{konevcny2016federated}, the authors developed two update methods to reduce the uplink communication costs for FL. The work in \cite{mcmahan2016communication} presented a practical update method for a deep FL algorithm and conducted an extensive empirical evaluation for five different FL models using four datasets. An echo state network-based FL algorithm is developed in \cite{chen2018federated} to analyze and predict the location and orientation for wireless virtual reality users. In \cite{konevcny2015federated}, the authors proposed a novel FL algorithm that can minimize the communication cost. The authors in \cite{samarakoon2018distributed} studied the problem of joint power and resource allocation for ultra-reliable low latency communication in vehicular networks. The work in \cite{ha2019coded} developed a new approach to minimize the computing and transmission delay for FL algorithms.
In \cite{habachi2019fast}, the authors used FL algorithms for traffic estimation so as to maximize the data rates of users. 
While interesting, these prior works \cite{bonawitz2019towards} and \cite{li2019federated,konevcny2016federated,mcmahan2016communication,chen2018federated,konevcny2015federated,samarakoon2018distributed,ha2019coded,habachi2019fast} assumed that wireless networks can readily integrate FL algorithms. However, in practice, due to the unreliability of the wireless channels and to the wireless resource limitations (e.g., in terms of bandwidth and power), FL algorithms will encounter training errors due to the limitations of the wireless medium \cite{park2018wireless}. For example, symbol errors introduced by the unreliable nature of the wireless channel and by resource limitations can impact the quality and correctness of the FL updates among users. Such errors will, in turn, affect the performance of FL algorithms, as well as their convergence speed. Moreover, due to the wireless bandwidth limitations, the number of users that can perform FL is limited; a design issue that is not considered in \cite{bonawitz2019towards} and \cite{li2019federated,konevcny2016federated,mcmahan2016communication,chen2018federated,konevcny2015federated,samarakoon2018distributed,ha2019coded,habachi2019fast}. Furthermore, due to limited energy consumption of each user's device and strict delay requirement of FL, not all wireless users can participate in FL. Therefore, one must select {the appropriate subset of users to perform FL algorithms and optimize the performance of FL.}
 In practice, to effectively deploy FL over real-world wireless networks, it is necessary to investigate how the wireless factors affect the performance of FL algorithms. Here, we note that, although some works such as \cite{8770530} and {\cite{9085259,park2018wireless, zeng2019energy,8664630,vu2019cell,8737464,chen2020convergence,8851249}} have studied communication aspects of FL, these works are limited in several ways. First, the works in \cite{8770530}, \cite{park2018wireless}, and \cite{9085259} only provided a high-level exposition of the challenges of communication in FL. Meanwhile, the authors in {\cite{9085259,zeng2019energy,8664630,vu2019cell,8737464,chen2020convergence}} did not consider the effect of packet transmission errors on the performance of FL. The authors in \cite{8851249} developed an analytical model to characterize the effect of packet transmission errors on the FL performance. However, the work in \cite{8851249} focused attention on three specific scheduling policies and, hence, did not address optimal user selection and resource allocation to optimize the FL performance. 
 
\subsection{Contributions}
The main contribution of this paper is, thus, a novel framework for enabling the implementation of FL algorithms over wireless networks by jointly taking into account FL and wireless metrics and factors. To our best knowledge,   \emph{this is the first work that provides a comprehensive study of the connection between the performance of FL algorithms and the underlying wireless network.} Our key contributions include:

\begin{itemize}
\item We propose a novel FL model in which {cellular-connected} wireless users transmit their locally trained FL models to a base station (BS) that generates a global FL model and transmits it back to the users. For the considered FL model, the bandwidth for uplink transmission is limited and, hence, the BS needs to select appropriate users to execute the FL algorithm so as to minimize the FL loss function. In addition, the impact of the wireless packet transmission errors on the parameter update process of the FL model is explicitly considered.


\item 
In the developed joint communication and FL model, the BS must optimize its resource allocation and the users must optimize their transmit power allocation so as to decrease the packet error rates of each user thus improving the FL performance.
  To this end, we formulate this joint resource allocation and user selection problem for FL as an optimization problem whose goal is to minimize the training loss while meeting the delay and energy consumption requirements. Hence, our framework \emph{jointly considers learning and wireless networking metrics}.
  
\item To solve this problem, we first derive a closed-form expression for the expected convergence rate of the FL algorithm so as to build an explicit relationship between the packet error rates and the performance of the FL algorithm.
Based on this relationship, the optimization problem can be simplified as a mixed-integer nonlinear programming problem. To solve this simplified problem, we first find the optimal transmit power under given user selection and resource block (RB) allocation. Then, we transform the original optimization problem into a bipartite matching problem that is solved using a Hungarian algorithm which finds the optimal, FL-aware user selection and RB allocation strategy.


\item To further reduce the effect of packet transmission errors on the performance and convergence speed of FL, we perform fundamental analysis on an expression for the expected convergence rate of FL algorithms, which shows that the transmit power, RB allocation, and user selection will significantly affect the convergence speed and performance of FL algorithms. Meanwhile, by appropriately setting the learning rate and selecting the number of users that participate in FL, the effects of the transmission errors on FL algorithms can be reduced and the convergence of FL can be guaranteed.

\end{itemize}
Simulation results show that the transmit power, RB allocation, and the number of users will jointly affect the performance of FL over wireless networks. In particular, the simulation result shows that the proposed FL algorithm that considers the wireless factors can achieve up to 1.4\%, 3.5\%, and 4.1\% improvement in identification accuracy compared, respectively, to an optimal user selection algorithm with random resource allocation, a standard FL algorithm (e.g., such as in \cite{konevcny2016federated},) FL with random user selection and resource allocation, and a wireless optimization algorithm that minimizes the sum packet error rates of all users while being agnostic to the FL parameters. 

The rest of this paper is organized as follows. The system model and problem formulation are described in Section \uppercase\expandafter{\romannumeral2}. The expected convergence rate of FL algorithms is studied in Section \uppercase\expandafter{\romannumeral3}. The optimal resource allocation and user selection are determined in Section \uppercase\expandafter{\romannumeral4}. Simulation results are analyzed in Section \uppercase\expandafter{\romannumeral5}. Conclusions are drawn in Section \uppercase\expandafter{\romannumeral6}.

%
%
%
%

\section{System Model and Problem Formulation}\label{se:system}

%
%
\begin{figure}[!t]
  \begin{center}
   \vspace{0cm}
    \includegraphics[width=10cm]{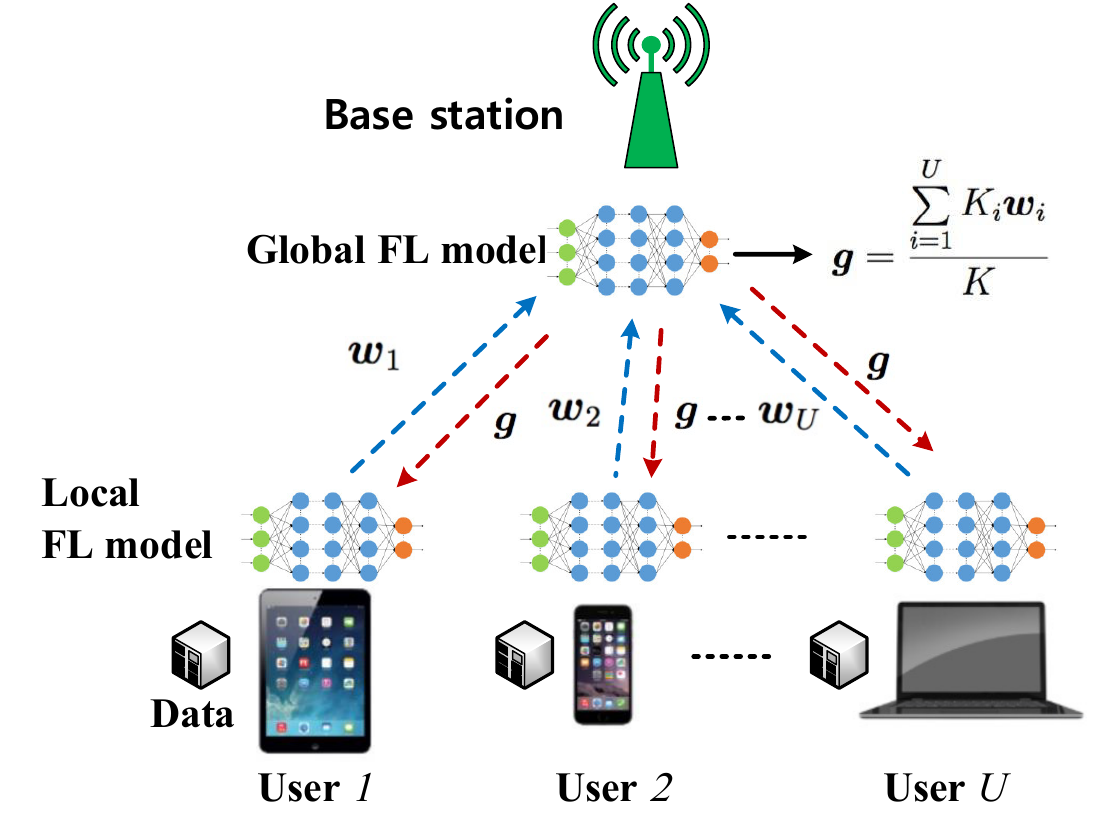}
    \vspace{-0.2cm}
    \caption{\label{model1} The architecture of an FL algorithm that is being executed over a wireless network with multiple devices and a single base station.}
  \end{center}\vspace{-0.2cm}
\end{figure}

\begin{figure}[!t]
  \begin{center}
   \vspace{0cm}
    \includegraphics[width=6cm]{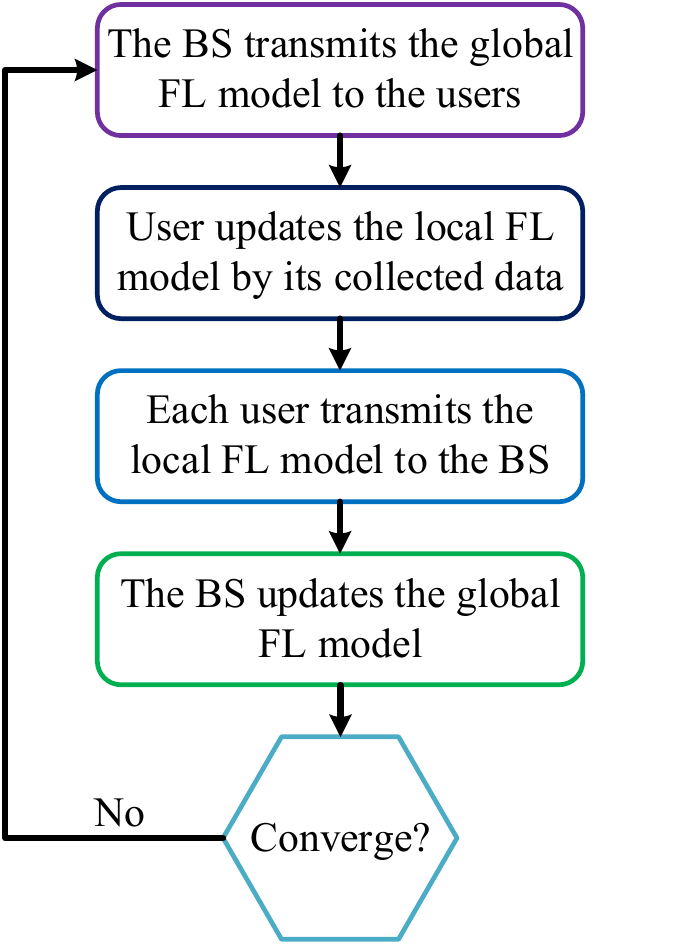}
    \vspace{-0.2cm}
    \caption{\label{FLprocedure} The learning procedure of an FL algorithm.}
  \end{center}\vspace{-0.7cm}
\end{figure}

Consider a cellular network in which one BS and a set $\mathcal{U}$ of $U$ users cooperatively perform an FL algorithm for data analysis and inference. For example, the network can execute an FL algorithm to sense the wireless environment and generate a holistic radio environment mapping \cite{8648450}. The use of FL for such applications is important because the data related to the wireless environment is distributed across the network \cite{saad2019vision} and the BS cannot collect all of this scattered data to implement a centralized learning algorithm. FL enables the BS and the users to collaboratively learn a shared learning model while keeping all of the training data at the device of each user.
In an FL algorithm, each user will use its collected training data to train an FL model. For example, for radio environment mapping, each user will collect the data related to the wireless environment for training an FL model. Hereinafter, the FL model that is trained at the device of each user (using the data collected by the user itself) is called the \emph{local FL model}. The BS is used to integrate the local FL models and generate a shared FL model. This shared FL model is used to improve the local FL model of each user so as to enable the users to collaboratively perform a learning task without training data transfer. Hereinafter, the FL model that is generated by the BS using the local FL models of its associated users is called the \emph{global FL model}.  As shown in Fig. \ref{model1}, the \emph{uplink} from the users to the BS is used to transmit the local FL model parameters while the \emph{downlink} is used to transmit the global FL model parameters.

\begin{table*}\footnotesize
  \newcommand{\tabincell}[2]{\begin{tabular}{@{}#1@{}}#2.8\end{tabular}}
\renewcommand\arraystretch{1.2}
 \caption{
    \vspace*{-0.1em}List of notations.}\vspace*{-0.6em}
\centering  
\begin{tabular}{|c||c|c||c|}
\hline
\textbf{Notation} & \textbf{Description} & \textbf{Notation} & \textbf{Description}\\
\hline
$U$ & Number of users & ${l}_{i}^\textrm{U}\left( \boldsymbol{r}_i,P_i \right)$& Uplink transmission delay\\
\hline
$\left( \boldsymbol{X}_i, \boldsymbol{y}_i\right)$ & Data collected by user $i$ &$\left(\boldsymbol{x}_{ik},{y}_{ik}\right)$ & Training data sample $k$ of user $i$  \\
\hline
$K$ & Total number of training data samples  &  $P_{\max}$ & Maximum transmit power of each user \\
\hline
 $P_B$ & Transmit power of the BS&${c}_{i}^\textrm{U}\left( \boldsymbol{r}_i,P_i \right)$ & Uplink data rate of user $i$ \\
\hline
$P_{i}$ & Transmit power of user $i$&${{K}_i}$ & Number of samples collected by user $i$  \\
\hline
$R$ & Number of RBs & $B^\textrm{D}$& Total downlink bandwidth of the BS\\
\hline
$\boldsymbol{g}$ & Global FL model& ${c}_{i}^\textrm{D}$ & Downlink data rate of user $i$ \\
\hline
$\mathcal{U}$ & Set of users & ${l}_{i}^\textrm{D}$ & Downlink transmission delay \\
\hline
{\color{black}$\boldsymbol{a}  \in \mathbb{R}^{1\times U}$ }& User selection vector& $Z\left( \boldsymbol{g}\right)$ & Data size of global FL model\\
\hline
  $\lambda$ & Learning rate & $q_i\left( \boldsymbol{r}_i,P_i   \right)$ & Packet error rate of user $i$\\
\hline
{\color{black}$\boldsymbol{R}\in \mathbb{R}^{R\times U}$}& RB allocation matrix of all users &$Z\left( \boldsymbol{w}_i\right)$ & Data size of local FL model \\
\hline
$\gamma_\textrm{T}$& Delay requirement  & $f\left(\boldsymbol{ g}\left(\boldsymbol{a}, \boldsymbol{R} \right), \boldsymbol{x}_{ik}, y_{ik}\right)$ & Loss function of FL\\
\hline
 $\boldsymbol{w}_i$ & Local FL model of user $i$ & $e_i\left(\boldsymbol{r}_i,P_i \right) $ & Energy consumption of user $i$ \\
\hline
$\gamma_\textrm{E}$ & Energy consumption requirement & {\color{black}$\boldsymbol{r}_{i}\in \mathbb{R}^{R\times 1}$} & RB allocation vector of user $i$ \\
\hline
$I_n$& Interference over RB $n$ & $B^\textrm{U}$ & Bandwidth of each RB \\
\hline
\end{tabular}
\vspace{-0.3cm}
\end{table*}  
 
 
\subsection{Machine Learning Model}
In our model, each user $i$ collects a marix $\boldsymbol{X}_i=\left[\boldsymbol{x}_{i1}, \ldots, \boldsymbol{x}_{iK_i}\right]$ of input data, where $K_i$ is the number of the samples collected by each user $i$ and each element $\boldsymbol{x}_{ik}$ is an input vector of the FL algorithm. The size of $\boldsymbol{x}_{ik}$ depends on the specific FL task. Our approach, however, is applicable to any generic FL algorithm and task.
Let ${y}_{ik}$ be the output of $\boldsymbol{x}_{ik}$. For simplicity, we consider an FL algorithm with a single output, however, our approach can be readily generalized to a case with multiple outputs \cite{konevcny2016federated}.
 The output data vector for training the FL algorithm of user $i$ is $\boldsymbol{y}_i=\left[{y}_{i1}, \ldots, {y}_{iK_i}\right]$.
We define a vector $\boldsymbol{w}_i$ to capture the parameters related to the local FL model that is trained by $\boldsymbol{X}_i$ and $\boldsymbol{y}_i$. In particular, $\boldsymbol{w}_i$ determines the local FL model of each user $i$. For example, in a linear regression learning algorithm, $\boldsymbol{x}_{ik}^{\rm{T}}{\boldsymbol{w}_i}$ represents the predicted output and $\boldsymbol{w}_i$ is a weight vector that determines the performance of the linear regression learning algorithm. {\color{black}For each user $i$, the local training problem seeks to find the optimal learning model parameters $\boldsymbol{w}_i^*$ that minimize its training loss. 
 The training process of an FL algorithm is done in a way to solve the following optimization problem:
\addtocounter{equation}{0}
\begin{equation}\label{eq:ML}
\begin{split}
\mathop {\min } \limits_{{\boldsymbol{w}_1, \ldots, \boldsymbol{w}_U}} \frac{1}{K}  \sum\limits_{i = 1}^U  \sum\limits_{k=1 }^{K_i} {f\left( {{\boldsymbol{w}_i},{\boldsymbol{x}_{ik}},{{y}_{ik}}} \right)} ,
\end{split}
\end{equation}
\vspace{-0.3cm}
\begin{align}\label{c1}
&\!\!\!\!\!\!\!\!\rm{s.\;t.}\;\scalebox{1}{$ \boldsymbol{w}_1=\boldsymbol{w}_2=\ldots=\boldsymbol{w}_U=\boldsymbol{g},$}\tag{\theequation a}
\end{align}
where $K=\sum\limits_{i = 1}^U K_i$ is total size of training data of all users and $\boldsymbol{g}$ is the global FL model that is generated by the BS and ${f\left( {{\boldsymbol{w}_i},{\boldsymbol{x}_{ik}},{{y}_{ik}}} \right)}$ is a loss function. The loss function captures the performance of the FL algorithm. For different learning tasks, the FL performance captured by the loss function is different. For example, for a prediction learning task, the loss function captures the prediction accuracy of FL. In contrast, for a classification learning task, the loss function captures the classification accuracy. Meanwhile, for different FL algorithms, different loss functions can be defined \cite{hennig2007some}. For example, for a linear regression FL, the loss function is ${f\left( {{\boldsymbol{w}_i},{\boldsymbol{x}_{ik}},{{y}_{ik}}} \right)}=\frac{1}{2}\left( \boldsymbol{x}_{ik}^{\rm{T}}{\boldsymbol{w}_i} -{y}_{ik} \right)^2$. As the prediction errors (i.e., $\boldsymbol{x}_{ik}^{\rm{T}}{\boldsymbol{w}_i} -{y}_{ik}$) increase, the loss function ${f\left( {{\boldsymbol{w}_i},{\boldsymbol{x}_{ik}},{{y}_{ik}}} \right)}$ increases. 
Constraint (\ref{eq:ML}a) is used to ensure that, once the FL algorithm converges, all of the users and the BS will share the same FL model for their learning task. This captures the fact that the purpose of an FL algorithm is to enable the users and the BS to learn an optimal global FL model without data transfer.   
To solve (\ref{eq:ML}), the BS will transmit the parameters $\boldsymbol{g}$ of the global FL model to its users so that they train their local FL models. Then, the users will transmit their local FL models to the BS to update the global FL model.
The detailed procedure of training an FL algorithm \cite{45648} to minimize the loss function in (\ref{eq:ML}) is shown in Fig. \ref{FLprocedure}. 
In FL, the update of each user $i$'s local FL model $\boldsymbol{w}_i$ depends on the global model $\boldsymbol{g}$ while the update of the global model $\boldsymbol{g}$ depends on all of the users' local FL models. The update of the  local FL model $\boldsymbol{w}_i$ depends on the learning algorithm. For example, one can use gradient descent, stochastic gradient descent, or randomized coordinate descent \cite{konevcny2016federated} to update the local FL model. The update of the global model $\boldsymbol{g}$ is given by \cite{konevcny2016federated}
\begin{equation}\label{eq:w}
\boldsymbol{g}_{t}=\frac{\sum\limits_{i = 1}^U K_i \boldsymbol{w}_{i,t}}{K}.
\end{equation}   
During the training process, each user will first use its training data $\boldsymbol{X}_i$ and $\boldsymbol{y}_i$ to train the local FL model $\boldsymbol{w}_i$ and then, it will transmit $\boldsymbol{w}_i$ to the BS via wireless cellular links. Once the BS receives the local FL models from all participating users, it will update the global FL model based on (\ref{eq:w}) and transmit the global FL model $\boldsymbol{g}$ to all users to optimize the local FL models. As time elapses, the BS and users can find their optimal FL models and use them to minimize the loss function in (\ref{eq:ML}).
Since all of the local FL models are transmitted over wireless cellular links, once they are received by the BS, they may contain erroneous symbols due to the unreliable nature of the wireless channel, which, in turn, will have a significant impact on the performance of FL. Meanwhile, the BS must update the global FL model once it receives all of the local FL models from its users and, hence, the wireless transmission delay will significantly affect the convergence of the FL algorithm. In consequence, to deploy FL over a wireless network, \emph{one must jointly consider the wireless and learning performance and factors.}


\subsection{Transmission Model}  
For uplink, we assume that an orthogonal frequency division multiple access (OFDMA) technique in which each user occupies one RB.
The uplink rate of user $i$ transmitting its local FL model parameters to the BS is given by
\begin{equation}\label{eq:uplinkdatarate}
c_{i}^\textrm{U} \left(\boldsymbol{r}_{i}, P_{i}\right)=  \sum\limits_{n = 1}^R r_{i,n} B^\textrm{U}\mathbb E_{h_i}\left( {\log _2}\left(\!1\!+\! {\frac{{{P_{i}}{h_{i}}}}{I_n+B^\textrm{U}N_0}} \!\right)\right),
\end{equation}
 where $\boldsymbol{r}_{i}=\left[{r}_{i,1},\ldots, {r}_{i,R}\right]$ is an RB allocation vector with $R$ being the total number of RBs, $r_{i,n} \in \left\{0,1\right\}$ and $ \sum\limits_{n = 1}^R r_{i,n}=1$; $r_{i,n}=1$ indicates that RB $n$ is allocated to user $i$, and $r_{i,n}=0$, otherwise; $B^\textrm{U}$ is the bandwidth of each RB and 
 $P_i$ is the transmit power of user $i$; 
{\color{black}$ h_{i}=o_{i}d_{i}^{-2}$ is the channel gain between user $i$ and the BS with $d_i$ being the distance between user $i$ and the BS and $o_{i}$ being the Rayleigh fading parameter}; {\color{black} $\mathbb E_{h_i} \left( \cdot \right)$ is the expectation with respect to $h_i$}; 
$N_0$ is the noise power spectral density; $I_n$ is the interference caused by the users that are located in other service areas (e.g., other BSs not participating in the FL algorithm) and use RB $n$. Note that, although we ignore the optimization of resource allocation for the users located at the other service areas, we must consider the interference caused by the users in other service areas (if they are sharing RBs with the considered FL users), since this interference may significantly affect the packet error rates and the performance of FL.  
 
Similarly, the downlink data rate achieved by the BS when transmitting the parameters of global FL model to each user $i$ is given by
 \begin{equation}\label{eq:downlinkdatarate}
 {\color{black}
c_{i}^\textrm{D} =  B^\textrm{D}\mathbb E_{h_i}\left( {\log _2}\left(\!1\!+\! {\frac{{{P_{B}}{ h_{i}}}}{ I^\textrm{D}  +B^\textrm{D}N_0}} \!\right)\right),}
\end{equation}
where $B^\textrm{D}$ is the bandwidth that the BS used to broadcast the global FL model of each user $i$; $P_B$ is the transmit power of the BS; $I^\textrm{D}$ is the interference caused by other BSs not participating in the FL algorithm.
 Given the uplink data rate $c_i^\textrm{U}$ in (\ref{eq:uplinkdatarate}) and the downlink data rate $c_i^\textrm{D}$ in (\ref{eq:downlinkdatarate}), the transmission delays between user $i$ and the BS over uplink and downlink are respectively specified as
\begin{equation}
l_{i}^\textrm{U}\left(\boldsymbol{r}_{i}, P_{i}\right) =\frac{Z\left(\boldsymbol{w}_i\right) }{c_{i}^\textrm{U} \left(\boldsymbol{r}_{i}, P_{i}\right)    },
\end{equation}
\begin{equation}
l_{i}^\textrm{D}=\frac{Z\left(\boldsymbol{g} \right)}{c_{i}^\textrm{D}},
\end{equation}
where function $Z\left( \boldsymbol{x} \right)$ is the data size of $\boldsymbol{x}$ which is defined as the number of bits that the users or the BS require to transmit vector $\boldsymbol{x}$ over wireless links. In particular, 
$Z\left(\boldsymbol{w}_i\right)$ represents the number of bits that each user $i$ requires to transmit local FL model $\boldsymbol{w}_i$ to the BS while $Z\left(\boldsymbol{g}\right)$ is the number of bits that the BS requires to transmit the global FL model $\boldsymbol{g}$ to each user. Here, $Z\left(\boldsymbol{w}_i\right)$ and $Z\left(\boldsymbol{g}\right)$ are determined by the type of implemented FL algorithm. From (\ref{eq:w}), we see that the number of elements in the global FL model $\boldsymbol{g}$ is similar to that of each user $i$'s local FL model $\boldsymbol{w}_i$. Hence, we assume $Z\left(\boldsymbol{w}_i\right)=Z\left(\boldsymbol{g}\right)$.

\subsection{Packet Error Rates}
For simplicity, we assume that each local FL model $\boldsymbol{w}_i$ will be transmitted as a single packet in the uplink. A cyclic redundancy check (CRC) mechanism is used to check the data errors in the received local FL models at the BS. In particular, $C\left( \boldsymbol{w}_i \right)=0$ indicates that the local FL model received by the BS contains data errors; otherwise, we have $C\left( \boldsymbol{w}_i \right)=1$. The packet error rate experienced by the transmission of each local FL model $\boldsymbol{w}_i$ to the BS is given by \cite{5703199}
\begin{equation}\label{eq:per}
q_i\left(\boldsymbol{r}_{i}, P_{i}\right)= \sum\limits_{n = 1}^R r_{i,n}q_{i,n},
\end{equation}
where {\color{black}$q_{i,n}=\mathbb E_{h_i}\left(1-\exp\left(-\frac{m\left({I_n+B^\textrm{U}N_0}\right)}{{{{{P_{i}}{h_{i}}}}}} \right) \right)$} is the packet error rate over RB $n$ with $m$ being a waterfall threshold \cite{5703199}.

In the considered system, whenever the received local FL model contains errors, the BS will not use it for the update of the global FL model. We also assume that the BS will not ask the corresponding users to resend their local FL models when the received local FL models contain data errors. Instead, the BS will directly use the remaining correct local FL models to update the global FL model. As a result, the global FL model in (\ref{eq:w}) can be written as 
\begin{equation}\label{eq:globalww}
\boldsymbol{ g}\left(\boldsymbol{a}, \boldsymbol{P},\boldsymbol {R}\right)=\frac{\sum\limits_{i = 1}^U K_ia_i \boldsymbol{ w}_{i}C\left( \boldsymbol{w}_i \right)}{{\sum\limits_{i = 1}^U K_ia_iC\left( \boldsymbol{w}_i \right)}},
\end{equation}
where 
\begin{equation} \label{itproofeq2}
C\left( \boldsymbol{w}_i \right)=\left\{ \begin{array}{ll}
\!\!1, &\text{with probability}\; 1-q_i\left(\boldsymbol{r}_{i}, P_{i}\right),\\
\!\!0, &\text{with probability}~q_i\left(\boldsymbol{r}_{i}, P_{i}\right),
\end{array} \right.
\end{equation}
$\boldsymbol{a}=[a_1,\ldots, a_U]$ is the vector of the user selection index with $a_i=1$ indicating that user $i$ performs the FL algorithm and $a_i=0$, otherwise, $\boldsymbol R=[\boldsymbol r_1, \cdots, \boldsymbol r_U]$, $\boldsymbol P=[ P_1, \cdots,  P_U]$, $\sum\limits_{i = 1}^U K_ia_iC\left( \boldsymbol{w}_i \right)$ is the total number of training data samples, which depends on the user selection vector $\boldsymbol{a}$ and packet transmission $C\left( \boldsymbol{w}_i \right)$, $K_ia_i\boldsymbol{ w}_{i}C\left( \boldsymbol{w}_i \right)=0$ indicates that the local FL model of user $i$ contains data errors and, hence, the BS will not use it to generate the global FL model, and $\boldsymbol{g}\left(\boldsymbol{a},\boldsymbol{P}, \boldsymbol {R}\right)$ is the global FL model that explicitly incorporates the effect of wireless transmission. 
From (\ref{eq:globalww}), we see that the global FL model also depends on the resource allocation matrix $\boldsymbol {R}$, user selection vector $\boldsymbol{a}$, and transmit power vector $\boldsymbol{P}$.


%


\subsection{Energy Consumption Model}
In our network, the energy consumption of each user consists of the energy needed for two purposes: a) Transmission of the local FL model  and b) Training of the local FL model.
The energy consumption of each user $i$ is given by \cite{8057276}
\begin{equation}\label{eq:energy}
\begin{split}
e_{i}\left( \boldsymbol{r}_{i},P_{i}\right) =\varsigma \omega_i \vartheta^2 Z\left( \boldsymbol{w}_{i}\right)+P_{i}l_{i}^\textrm{U}\left( \boldsymbol{r}_{i}, P_{i}\right),
\end{split}
\end{equation} 
where $\vartheta$ is the frequency of the central processing unit (CPU) clock of each user $i$, $\omega_i$ is the number of CPU cycles required for computing per bit data of user $i$, which is assumed to be equal for all users, and
 $\varsigma$ is the energy consumption 
coefficient depending on the chip of each user $i$'s device \cite{8057276}. In (\ref{eq:energy}), $\varsigma \omega_i \vartheta^2  Z\left( \boldsymbol{w}_{i}\right)$ is the energy consumption of user $i$ training the local FL model at its own device and $P_{i}l_{i}^\textrm{U}\left( \boldsymbol{r}_{i}, P_{i}\right)$ represents the energy consumption of local FL model transmission from user $i$ to the BS. Note that, since the BS can have continuous power supply, we do not consider the energy consumption of the BS in our optimization problem.

\subsection{Problem Formulation}
%
To jointly design the wireless network and the FL algorithm, we now formulate an optimization problem whose goal is to minimize the training loss, while factoring in the wireless network parameters. This minimization problem includes optimizing transmit power allocation as well as resource allocation for each user.  The minimization problem is given by
\addtocounter{equation}{0}
\begin{equation}\label{eq:max}
\begin{split}
\mathop {\min }\limits_{\boldsymbol{a}, \boldsymbol{P}, \boldsymbol{R}}  \frac{1}{K} \sum\limits_{i=1}^{U}  \sum\limits_{k =1 }^{K_i}{f\left( {\boldsymbol{g}\left(\boldsymbol{a}, \boldsymbol{P}, \boldsymbol{R}\right),{\boldsymbol{x}_{ik}},{{y}_{ik}}} \right)},
\end{split}
\end{equation}
\vspace{-0.3cm}
\begin{align}\label{c1}
\setlength{\abovedisplayskip}{-20 pt}
\setlength{\belowdisplayskip}{-20 pt}
&\!\!\!\!\!\!\!\!\rm{s.\;t.}\;\scalebox{1}{$a_i, r_{i,n} \in \left\{0,1\right\}, \;\;\;\;\;\forall i \in \mathcal{U}, n=1,\ldots, R,$}\tag{\theequation a}\\
&\scalebox{1}{$\;\;\; \sum\limits_{n = 1}^R r_{i,n}=a_i,\;\;\forall i \in \mathcal{U}, $} \tag{\theequation b}\\
&\scalebox{1}{$\;\;\;   l_{i}^\textrm{U}\left(\boldsymbol{r}_{i}, P_{i}\right)+l_{i}^\textrm{D} \le \gamma_\textrm{T}  ,\;\;\forall i \in \mathcal{U}, $} \tag{\theequation c}\\
&\scalebox{1}{$\;\;\;  e_{i}\left(\boldsymbol{r}_{i},P_{i}\right)\le \gamma_\textrm{E} ,\;\;\forall i \in \mathcal{U},$} \tag{\theequation d}\\
&\scalebox{1}{$\;\;\; \sum\limits_{i \in \mathcal{U}} {r_{i,n}}  \le 1,~\forall n=1,\ldots, R, $} \tag{\theequation e}\\
&\scalebox{1}{$\;\;\;  0 \le P_{i} \le P_{\max},\;\;\;\;\;\forall i \in \mathcal{U},$}\tag{\theequation f}
\end{align}
where
$\gamma_\textrm{T}$ is the delay requirement for implementing the FL algorithm, $\gamma_\textrm{E}$ is the energy consumption of the FL algorithm, and $B$ is the total downlink bandwidth. (\ref{eq:max}a) and (\ref{eq:max}b) indicates that each user can occupy only one RB for uplink data transmission. (\ref{eq:max}c) is the delay needed to execute the FL algorithm at each learning step. (\ref{eq:max}d) is the energy consumption requirement of performing an FL algorithm at each learning step. 
 (\ref{eq:max}e) indicates that each uplink RB can be allocated to at most one user. (\ref{eq:max}f) is a maximum transmit power constraint. {\color{black} From (\ref{eq:max}), we can see that the user selection vector $\boldsymbol{a}$, the RB allocation matrix $\boldsymbol{R}$, and the transmit power vector $\boldsymbol{P}$ will not change during the FL training process and the optimized $\boldsymbol{a}$, $\boldsymbol{R}$, and $\boldsymbol{P}$ must meet the delay and energy consumption requirements at each learning step in (\ref{eq:max}c) and (\ref{eq:max}d).  }

From (\ref{eq:per}) and (\ref{eq:globalww}), we see that the transmit power and resource allocation determine the packet error rate, thus affecting the update of the global FL model. In consequence, the loss function of the FL algorithm in (\ref{eq:max}) depends on the resource allocation and transmit power. Moreover, (\ref{eq:max}c) shows that, in order to perform an FL algorithm, the users must satisfy a specific delay requirement. In particular, in an FL algorithm, the BS must wait to receive the local model of each user before updating its global FL model. Hence, transmission delay plays a key role in the FL performance. In a practical FL algorithm, it is desirable that all users transmit their local FL models to the BS simultaneously. From (\ref{eq:max}d), we see that to perform the FL algorithm, a given user must have enough energy to transmit and update the local FL model throughout the FL iterative process. If this given user does not have enough energy, the BS should choose this user to participate in the FL process.
In consequence, in order to implement an FL algorithm in a real-world network, the wireless network must provide low energy consumption and latency, and highly reliable data transmission.  

\section{Analysis of the FL Convergence Rate}
To solve (\ref{eq:max}), we first need to analyze how the packet error rate affects the performance of the FL. To find the relationship between the packet error rates and the FL performance, we must first analyze the convergence rate of FL. However, since the update of the global FL model depends on the instantaneous signal-to-interference-plus-noise ratio (SINR), we can analyze only the expected convergence rate of FL.  Here, we first analyze the expected convergence rate of FL. Then, we show how the packet error rate affects the performance of the FL in (\ref{eq:max}).

 In the studied network, the users adopt a standard gradient descent method to update their local FL models as done in \cite{konevcny2016federated}. {\color{black}Therefore, during the training process, the local FL model $\boldsymbol{w}_i$ of each {selected} user $i$ ($a_i=1$) at step $t$ is}
\begin{equation}\label{eq:update}
\boldsymbol{w}_{i,t+1}=\boldsymbol{ g}_t\left(\boldsymbol{a},\boldsymbol{P},  \boldsymbol{R} \right) -\frac{\lambda}{K_i} \sum\limits_{k =1 }^{K_i} \nabla {f\left( { \boldsymbol{ g}_t\left(\boldsymbol{a},\boldsymbol{P}, \boldsymbol{R}\right) ,{\boldsymbol{x}_{ik}},{{y}_{ik}}} \right)},
\end{equation}  
where $\lambda$ is the learning rate and $\nabla {f\left( { \boldsymbol{ g}_t\left(\boldsymbol{a},\boldsymbol{P}, \boldsymbol{R}\right) ,{\boldsymbol{x}_{ik}},{{y}_{ik}}} \right)}$ is the gradient of ${f\left( {\boldsymbol{ g}_t\left(\boldsymbol{a}, \boldsymbol{P}, \boldsymbol{R}\right) ,{\boldsymbol{x}_{ik}},{{y}_{ik}}} \right)}$ with respect to $\boldsymbol{ g}_t\left(\boldsymbol{a},\boldsymbol{P}, \boldsymbol{R}\right) $. 

We assume that $F\left(\boldsymbol{g}\right)=\frac{1}{K} \sum\limits_{i=1}^{U}  \sum\limits_{k =1 }^{K_i}{f\left( {\boldsymbol{g},{\boldsymbol{x}_{ik}},{{y}_{ik}}} \right)}$ and $F_i\left(\boldsymbol{g}\right)=\sum\limits_{k =1 }^{K_i}{f\left( {\boldsymbol{g},{\boldsymbol{x}_{ik}},{{y}_{ik}}} \right)}$ where $\boldsymbol{g}$ is short for
$\boldsymbol{g}\left(\boldsymbol{a}, \boldsymbol{P}, \boldsymbol{R}\right)$. Based on (\ref{eq:update}), the update of global FL model $\boldsymbol{g}$ at step $t$ is given by
\begin{equation} \label{itproofeq1}
\boldsymbol{g}_{t+1}= \boldsymbol{g}_{t} -\lambda \left(\nabla F \left(\boldsymbol{g}_{t} \right)-\boldsymbol{o}\right),
\end{equation} 
where {$\color{black}\boldsymbol{o}=\nabla F \left(\boldsymbol{g}_{t} \right)-\frac{\sum\limits_{i = 1}^U a_i \sum\limits_{k =1 }^{K_i}{\nabla f\left( {\boldsymbol{g},{\boldsymbol{x}_{ik}},{{y}_{ik}}} \right)} C\left( \boldsymbol{w}_i \right)}{{\sum\limits_{i = 1}^U K_ia_iC\left( \boldsymbol{w}_i \right)}}$}.
We also assume that the FL algorithm converges to an optimal global FL model $\boldsymbol{g}^*$ after the learning steps. To derive the expected convergence rate of FL, we first make the following assumptions, {\color{black}as done in \cite{8851249,yang2019energy}}.
\begin{itemize}
\item First, we assume that the gradient $\nabla F\left( \boldsymbol{g}\right)$ of $F\left( \boldsymbol{g}\right)$ is uniformly Lipschitz continuous with respect to $\boldsymbol{g}$ \cite{friedlander2012hybrid}. Hence, we have
\begin{equation}\label{itproofas1}
\|\nabla  F \left(\boldsymbol{g}_{t+1} \right)  - \nabla F\left( \boldsymbol{g}_{t} \right)\|
\leq  L\| \boldsymbol{g}_{t+1} - \boldsymbol{g}_{t}\|,
\end{equation}
where $L$ is a positive constant and $\| \boldsymbol{g}_{t+1} - \boldsymbol{g}_{t}\|$ is the norm of $ \boldsymbol{g}_{t+1} - \boldsymbol{g}_{t}$.    

\item Second, we assume that $F\left( \boldsymbol{g}\right)$ is strongly convex with positive parameter $\mu$, such that
\begin{equation}\label{itproofas2}
F \left(\boldsymbol{g}_{t+1} \right) \geq F \left(\boldsymbol{g}_{t} \right)
+\left( \boldsymbol{g}_{t+1} - \boldsymbol{g}_{t}\right)^{T}  \nabla F \left(\boldsymbol{g}_{t} \right)
+\frac  {\mu } 2 \| \boldsymbol{g}_{t+1} - \boldsymbol{g}_{t}\|^2.
\end{equation}

\item We also assumed that $F\left( \boldsymbol{g}\right)$ is twice-continuously differentiable. 
Based on \eqref{itproofas1} and \eqref{itproofas2}, we have
\begin{equation}\label{itproofas2_2}
 \mu \boldsymbol{I} \preceq
\nabla^2 F \left(\boldsymbol{g} \right) \preceq L \boldsymbol{I}.
\end{equation}

\item We also assume that $\|\nabla {f\left( { \boldsymbol{g}_{t} ,{\boldsymbol{x}_{ik}},{{y}_{ik}}} \right)}\|^2\leq \zeta_1+\zeta_2 \| \nabla F \left(\boldsymbol{g}_{t} \right)\|^2$ with $\zeta_1, \zeta_2\ge 0$.

\end{itemize} 
{\color{black}These assumptions can be satisfied by several widely used loss functions such as the mean squared error, logistic regression, and cross entropy \cite{friedlander2012hybrid}. These popular loss functions can be used to capture the performance of implementing practical FL algorithms for identification, prediction, and classification. For future work, we can investigate how to extend our work for other non-convex loss functions.} The expected convergence rate of the FL algorithms can now be obtained by the following theorem. 

\begin{theorem}\label{th:1}
\emph{Given the transmit power vector $\boldsymbol{P}$, RB allocation matrix $\boldsymbol{R}$, user selection vector $\boldsymbol{a}$, optimal global FL model $\boldsymbol{g}^*$, and the learning rate $\lambda=\frac{1}{L}$, the upper bound of $\mathbb E \left(F\left(\boldsymbol{g}_{t+1}\right) -  F\left(\boldsymbol{g}^*\right)\right)$ can be given by
\begin{equation}\label{eq:theorem1}
{\color{black}
\begin{split}
&\mathbb E \left(F\left(\boldsymbol{g}_{t+1}\right)-F\left(\boldsymbol{g}^{*}\right)\right)  \leq A^t  \mathbb E\left(F \left(\boldsymbol{g}_{0} \right)-F\left(\boldsymbol{g}^{*}\right)\right)
+\underbrace{\frac  {2\zeta_1}  {LK}\sum\limits_{i = 1}^U K_i  \left(1-a_i+a_iq_i\left(\boldsymbol{r}_{i}, P_{i} \right)\right) \frac{1-A^t}{1-A}}_{\textrm{Impact of wireless factors on FL convergence}},
 \end{split}
 }
\end{equation}
where {\color{black}$A=1-\frac{\mu}{L}+ \frac  {4\mu\zeta_2} { {LK}}\sum\limits_{i = 1}^U K_i \left(1-a_i+a_iq_i\left(\boldsymbol{r}_{i}, P_{i} \right)\right)$} and {\color{black} $\mathbb E \left( \cdot \right)$ is the expectation with respect to packet error rate}.
}
\end{theorem}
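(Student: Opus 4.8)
The plan is to convert the single global-model update \eqref{itproofeq1} into a one-step contraction on the expected optimality gap and then unroll that recursion as a geometric series. First I would apply the quadratic upper bound implied by the $L$-smoothness in \eqref{itproofas1} (equivalently $\nabla^{2}F\preceq L\boldsymbol I$ from \eqref{itproofas2_2}),
\[
F(\boldsymbol g_{t+1})\le F(\boldsymbol g_{t})+(\boldsymbol g_{t+1}-\boldsymbol g_{t})^{T}\nabla F(\boldsymbol g_{t})+\frac{L}{2}\|\boldsymbol g_{t+1}-\boldsymbol g_{t}\|^{2}.
\]
Substituting $\boldsymbol g_{t+1}-\boldsymbol g_{t}=-\frac{1}{L}\big(\nabla F(\boldsymbol g_{t})-\boldsymbol o\big)$ from \eqref{itproofeq1} with $\lambda=1/L$ and expanding the square, the cross terms in $\boldsymbol o^{T}\nabla F(\boldsymbol g_{t})$ cancel, leaving the clean descent inequality
\[
F(\boldsymbol g_{t+1})\le F(\boldsymbol g_{t})-\frac{1}{2L}\|\nabla F(\boldsymbol g_{t})\|^{2}+\frac{1}{2L}\|\boldsymbol o\|^{2}.
\]

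The decisive step is to take expectation over the random packet outcomes and bound $\mathbb E\|\boldsymbol o\|^{2}$. The guiding observation is that $\boldsymbol o$ is precisely the aggregate of the per-user gradient contributions that are \emph{dropped} from the global update: user $i$'s contribution is absent whenever it is not selected ($a_i=0$) or is selected but its packet is corrupted ($a_i=1$, $C(\boldsymbol w_i)=0$), an event of probability $1-a_i+a_iq_i$ because $C(\boldsymbol w_i)$ is Bernoulli with failure probability $q_i(\boldsymbol r_i,P_i)$ and these events are independent across users. Accordingly I would (i) rewrite $\boldsymbol o$ as a sum over the dropped users, (ii) bound its squared norm via Cauchy--Schwarz together with the growth assumption $\|\nabla f(\boldsymbol g_{t},\boldsymbol x_{ik},y_{ik})\|^{2}\le\zeta_{1}+\zeta_{2}\|\nabla F(\boldsymbol g_{t})\|^{2}$ applied to each dropped term, and (iii) take the expectation over the independent $C(\boldsymbol w_i)$ so that each drop-indicator becomes its probability $1-a_i+a_iq_i$. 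This yields $\mathbb E\|\boldsymbol o\|^{2}\le\frac{4}{K}\sum_{i=1}^{U}K_i\big(1-a_i+a_iq_i\big)\big(\zeta_{1}+\zeta_{2}\|\nabla F(\boldsymbol g_{t})\|^{2}\big)$, the factor $4$ being exactly what is needed to match the constants in \eqref{eq:theorem1}.

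Next I would substitute this bound into the descent inequality, group the $\|\nabla F(\boldsymbol g_{t})\|^{2}$ terms, and invoke the Polyak--{\L}ojasiewicz inequality $\|\nabla F(\boldsymbol g_{t})\|^{2}\ge 2\mu\big(F(\boldsymbol g_{t})-F(\boldsymbol g^{*})\big)$, which follows from the strong convexity \eqref{itproofas2} by setting $\boldsymbol g_{t+1}=\boldsymbol g^{*}$ there and minimizing its right-hand side over the displacement $\boldsymbol g^{*}-\boldsymbol g_{t}$. Provided the coefficient $\frac{1}{2L}-\frac{2\zeta_{2}}{LK}\sum_{i}K_i(1-a_i+a_iq_i)$ multiplying $\|\nabla F(\boldsymbol g_{t})\|^{2}$ is nonnegative, this produces the one-step recursion $\mathbb E\big(F(\boldsymbol g_{t+1})-F(\boldsymbol g^{*})\big)\le A\,\mathbb E\big(F(\boldsymbol g_{t})-F(\boldsymbol g^{*})\big)+B$ with $A$ as stated and $B=\frac{2\zeta_{1}}{LK}\sum_{i}K_i(1-a_i+a_iq_i)$. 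Unrolling over the iterations and summing the geometric series $1+A+\cdots+A^{t-1}=\frac{1-A^{t}}{1-A}$ then gives the bound in \eqref{eq:theorem1}.

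The hard part is the control of $\mathbb E\|\boldsymbol o\|^{2}$. The genuine difficulty is the random normalizing denominator $\sum_{i}K_ia_iC(\boldsymbol w_i)$ appearing in $\boldsymbol o$, which couples the surviving users nonlinearly and makes a naive term-by-term expectation delicate; a rigorous argument must either lower-bound or effectively replace this denominator before applying Cauchy--Schwarz and the growth assumption. A secondary point, which is exactly where the wireless factors enter, is ensuring the $\|\nabla F(\boldsymbol g_{t})\|^{2}$ coefficient above stays nonnegative (equivalently that $A<1$): this holds only when the weighted sum of the $1-a_i+a_iq_i$ is small enough, i.e. when enough users are selected and their packet error rates $q_i(\boldsymbol r_i,P_i)$ are kept low, which is what ties the convergence guarantee back to the power and resource-allocation problem \eqref{eq:max}.
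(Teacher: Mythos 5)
Your proposal follows essentially the same route as the paper's Appendix A: the $L$-smooth descent lemma with $\lambda=1/L$, a bound of the form $\mathbb E\|\boldsymbol o\|^{2}\le\frac{4}{K}\sum_{i}K_i(1-a_i+a_iq_i)(\zeta_1+\zeta_2\|\nabla F(\boldsymbol g_t)\|^2)$ obtained by splitting $\boldsymbol o$ over received versus dropped users (where the growth assumption applied to the received group cancels the random denominator, exactly the difficulty you flag), then the strong-convexity inequality $\|\nabla F(\boldsymbol g_t)\|^2\ge 2\mu(F(\boldsymbol g_t)-F(\boldsymbol g^*))$ and unrolling the geometric recursion. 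The argument and all constants match; no gaps beyond those already present in the paper's own proof.
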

\begin{proof} See Appendix A.
\end{proof}
{\color{black}In Theorem \ref{th:1}, $\boldsymbol{g}_{t+1}$ is the global FL model that is generated based only on the the local FL models of selected users ($a_i=1$) at step $t+1$. $\boldsymbol{g}^{*}$ is the optimal FL model that is generated based on the local FL models of all uses in an ideal setting with no wireless errors. From Theorem \ref{th:1}, we see that a gap, $\frac  {2\zeta_1}  {LK}\sum\limits_{i = 1}^U K_i  \left(1-a_i+a_iq_i\left(\boldsymbol{r}_{i}, P_{i} \right)\right)\frac{1-A^t}{1-A}$, exists between $\mathbb E \left( F\left(\boldsymbol{g}_{t}\right)\right)  $ and $\mathbb E \left(F\left(\boldsymbol{g}^{*}\right)\right)$. This gap is caused by the packet errors and the user selection policy.} As the packet error rate decreases, the gap between $\mathbb E \left(F\left(\boldsymbol{g}_{t}\right)\right)  $ and $\mathbb E \left( F\left(\boldsymbol{g}^{*}\right)\right)$ decreases. Meanwhile, as the number of users that implement the FL algorithm increases, the gap also decreases. Moreover, as the packet error rate decreases, the value of $A$ also decreases, which indicates that the convergence speed of the FL algorithm improves. Hence, it is necessary to optimize resource allocation, user selection, and transmit power for the implementation of any FL algorithm over a realistic wireless network. {\color{black}Theorem \ref{th:1} can be extended to the case in which each local FL model needs to be transmitted over a large number of packets by replacing the packet error rate $q_i\left(\boldsymbol{r}_{i}, P_{i} \right)$ in (\ref{eq:theorem1}) with the error rate of transmitting multiple packets to send the entire local FL model.}

According to Theorem \ref{th:1}, the following result is derived to guarantee the convergence of the FL algorithm. 
 \begin{proposition}\label{pr:1}
\emph{Given the learning rate $\lambda=\frac{1}{L}$, to guarantee convergence and simplify the optimization problem in (\ref{eq:max}), $\zeta_2$ must satisfy
\begin{equation}\label{eq:pr1}
{\color{black}
0<\zeta_2 < \frac{K}{\mathop {\max }\limits_{\boldsymbol{P}, \boldsymbol{R}} 4\sum\limits_{i=1}^U  K_i q_i\left(\boldsymbol{r}_{i}, P_{i} \right)}.}
\end{equation}
}
\end{proposition}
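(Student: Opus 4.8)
The plan is to read off convergence directly from the shape of the upper bound in Theorem~\ref{th:1}. That bound has the form $A^t\,\mathbb E(F(\boldsymbol{g}_{0})-F(\boldsymbol{g}^*))+\frac{2\zeta_1}{LK}B\,\frac{1-A^t}{1-A}$, where I abbreviate $B=\sum_{i=1}^U K_i\left(1-a_i+a_iq_i(\boldsymbol{r}_i,P_i)\right)$. The transient term decays as $A^t$ and the steady-state term is the geometric sum $\frac{1-A^t}{1-A}=\sum_{j=0}^{t-1}A^j$, so the whole right-hand side stays finite as $t\to\infty$ and the dependence on $\boldsymbol{g}_{0}$ is forgotten exactly when $0\le A<1$. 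Hence the entire proposition reduces to turning the single scalar inequality $A<1$ into an admissible range for $\zeta_2$.

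First I would dispose of the lower end. Since the sandwich $\mu\boldsymbol{I}\preceq\nabla^2 F(\boldsymbol{g})\preceq L\boldsymbol{I}$ in (\ref{itproofas2_2}) forces $0<\mu\le L$, we have $1-\frac{\mu}{L}\ge 0$, while the extra term $\frac{4\mu\zeta_2}{LK}B$ is nonnegative because $\zeta_2\ge0$ and $B\ge0$; therefore $A\ge0$ is automatic and never binding, leaving $\zeta_2>0$ as the non-degeneracy lower bound inherited from the gradient-variance assumption. Next I would solve the binding inequality: $A<1$ is equivalent to $\frac{4\mu\zeta_2}{LK}B<\frac{\mu}{L}$, i.e.\ $\zeta_2<\frac{K}{4B}$.

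The remaining step makes this hold uniformly over the decision variables $\boldsymbol{a},\boldsymbol{P},\boldsymbol{R}$, since convergence must be guaranteed for every candidate the optimizer in (\ref{eq:max}) might pick, so I would replace $B$ by its worst (largest) feasible value. The key point is that the delay constraint (\ref{eq:max}c) and the energy constraint (\ref{eq:max}d) keep each selected user's transmit power high enough that $q_i(\boldsymbol{r}_i,P_i)$ stays bounded away from $1$; maximizing the packet-error contribution over the feasible $(\boldsymbol{P},\boldsymbol{R})$ puts $\max_{\boldsymbol{P},\boldsymbol{R}}4\sum_{i=1}^U K_iq_i(\boldsymbol{r}_i,P_i)$ in the denominator and yields $0<\zeta_2<\frac{K}{\max_{\boldsymbol{P},\boldsymbol{R}}4\sum_{i=1}^U K_iq_i(\boldsymbol{r}_i,P_i)}$. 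For the ``simplify the optimization problem'' claim I would then observe that, once $0<A<1$ is secured, $\frac{1-A^t}{1-A}$ is increasing in $A$, $A$ is affine and increasing in $B$, and the gap prefactor is proportional to $B$; hence the whole bound is monotone increasing in $B$, so minimizing the loss surrogate collapses to minimizing $B$, equivalently to maximizing $\sum_{i=1}^U K_ia_i\left(1-q_i(\boldsymbol{r}_i,P_i)\right)$.

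The main obstacle I anticipate is precisely the uniform reduction in the third step: justifying rigorously that the worst case of $B=\sum_{i=1}^U K_i\left(1-a_i+a_iq_i\right)$ over the feasible region is faithfully captured by $\max_{\boldsymbol{P},\boldsymbol{R}}\sum_{i=1}^U K_iq_i$ requires care about the role of the unselected users' $1-a_i$ contributions and about how the coupled feasibility in (\ref{eq:max}c)--(\ref{eq:max}d) pins the achievable packet error rates away from one. The monotonicity argument underlying the simplification is, by comparison, routine.
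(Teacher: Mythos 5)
Your proposal is correct and follows essentially the same route as the paper: the paper likewise reduces everything to forcing $A<1$, notes $\mu<L$ so that $1-\mu/L$ is harmless, solves the remaining inequality to get $\zeta_2 < K/\bigl(4\sum_{i} K_i q_i(\boldsymbol{r}_i,P_i)\bigr)$, then takes the maximum over $(\boldsymbol{P},\boldsymbol{R})$ so the condition holds for every allocation, with $\zeta_2>0$ coming from the gradient-bound assumption exactly as you say. The one obstacle you flag --- whether the $(1-a_i)$ contributions of unselected users are faithfully captured by $\max_{\boldsymbol{P},\boldsymbol{R}}\sum_{i} K_i q_i$ --- is not resolved by the paper either, whose proof simply writes $A$ with $\sum_{i} K_i q_i$ in place of $\sum_{i} K_i(1-a_i+a_i q_i)$, so on that point your version is, if anything, the more careful one.
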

\begin{proof}
From Theorem \ref{th:1}, we see that when $A<1$, $A^t=0$. Hence, $\mathbb E \left( F\left(\boldsymbol{g}_{t+1}\right)-F\left(\boldsymbol{g}^{*}\right)\right) = \sum\limits_{i=1}^U K_i q_i\left(\boldsymbol{r}_{i}, P_{i} \right) \frac{1}{1-A}$ and the FL algorithm converges. In consequence, to guarantee the convergence, we only need to make $A=1-\frac{\mu}{L}+ \frac  {4\mu\zeta_2} { {LK}}\sum\limits_{i=1}^U  K_i q_i\left(\boldsymbol{r}_{i}, P_{i} \right)<1$. 
From (\ref{itproofas2_2}), we see that $\mu<L$ and, hence, $\frac{\mu}{L}<1$. To make $A<1$, we only need to ensure that $\frac{4\mu\zeta_2}{LK}\sum\limits_{i=1}^U  K_i q_i\left(\boldsymbol{r}_{i}, P_{i} \right)-\frac{\mu}{L}<0$. Therefore, we have $\zeta_2<\frac{K}{4\sum\limits_{i=1}^U  K_i q_i\left(\boldsymbol{r}_{i}, P_{i} \right)}$. {\color{black}To simplify the optimization problem in (\ref{eq:max}), $\zeta_2$ must satisfy $\zeta_2<\frac{K}{4\sum\limits_{i=1}^U  K_i q_i\left(\boldsymbol{r}_{i}, P_{i} \right)}$ for all RB allocation schemes. Hence, we choose this parameter such that $\zeta_2 < \frac{K}{\mathop {\max }\limits_{\boldsymbol{P}, \boldsymbol{R}} 4\sum\limits_{i=1}^U  K_i q_i\left(\boldsymbol{r}_{i}, P_{i} \right)}$}. Since $\zeta_2$ must satisfy $\|\nabla {f\left( { \boldsymbol{g}_{t} ,{\boldsymbol{x}_{ik}},{{y}_{ik}}} \right)}\|^2\leq \zeta_1+\zeta_2 \| \nabla F \left(\boldsymbol{g}_{t} \right)\|^2$, we have $\zeta_2>0$.    This completes the proof.
\end{proof}

From Proposition \ref{pr:1}, we see that the convergence of the FL algorithm depends on the parameters related to the approximation of $ \| \nabla F \left(\boldsymbol{g}_{t} \right)\|^2$. Using Proposition \ref{pr:1}, we can determine the convergence of the FL algorithm based on the approximation of $ \| \nabla F \left(\boldsymbol{g}_{t} \right)\|^2$. 

Based on Theorem \ref{th:1}, next, we can also derive the convergence rate of an FL algorithm when there are no packet errors. 

\begin{lemma}\label{le:1}
\emph{Given the optimal global FL model $\boldsymbol{g}^*$ and the learning rate $\lambda=\frac{1}{L}$, the upper bound of $\mathbb E  \left(F\left(\boldsymbol{g}_{t+1}\right) -  F\left(\boldsymbol{g}^*\right)\right)$ for an FL algorithm without considering packet errors and user selection is given by
\begin{equation}\label{eq:lemma1}
\begin{split}
\mathbb E \left( F\left(\boldsymbol{g}_{t+1}\right)-F\left(\boldsymbol{g}^{*}\right)\right)  \leq \left(1-\frac{\mu}{L} \right)^t  \mathbb E\left(F \left(\boldsymbol{g}_{0} \right)-F\left(\boldsymbol{g}^{*}\right)\right).
 \end{split}
\end{equation}
}
\end{lemma}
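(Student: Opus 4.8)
The plan is to obtain Lemma~\ref{le:1} as an immediate specialization of Theorem~\ref{th:1}, since ``without considering packet errors and user selection'' corresponds precisely to switching off the two wireless imperfections that enter the bound \eqref{eq:theorem1}. Concretely, ``no user selection'' means every user participates, i.e. $a_i=1$ for all $i\in\mathcal{U}$, and ``no packet errors'' means $q_i\left(\boldsymbol{r}_i,P_i\right)=0$ for all $i$. I would therefore start by substituting these two settings into the constant $A$ and into the wireless-impact term appearing in Theorem~\ref{th:1}.

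The key observation is that the per-user factor $1-a_i+a_iq_i\left(\boldsymbol{r}_i,P_i\right)$ — the only place where $\boldsymbol{a}$, $\boldsymbol{P}$, and $\boldsymbol{R}$ feed into the bound — collapses to $1-1+1\cdot 0=0$ for every $i$ under these settings. Consequently $A$ reduces to $1-\frac{\mu}{L}$, and the entire ``Impact of wireless factors on FL convergence'' summand, being proportional to $\sum_{i=1}^U K_i\left(1-a_i+a_iq_i\right)$, vanishes identically. Feeding this directly into \eqref{eq:theorem1} leaves only $A^t\,\mathbb{E}\left(F\left(\boldsymbol{g}_0\right)-F\left(\boldsymbol{g}^*\right)\right)=\left(1-\frac{\mu}{L}\right)^t\mathbb{E}\left(F\left(\boldsymbol{g}_0\right)-F\left(\boldsymbol{g}^*\right)\right)$, which is exactly \eqref{eq:lemma1}.

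As an equivalent sanity check rather than a separate argument, I would note that with $a_i=1$ and $C\left(\boldsymbol{w}_i\right)=1$ the correction term $\boldsymbol{o}$ in \eqref{itproofeq1} equals zero, because the aggregated gradient $\frac{1}{K}\sum_{i=1}^U\sum_{k=1}^{K_i}\nabla f\left(\boldsymbol{g}_t,\boldsymbol{x}_{ik},y_{ik}\right)$ is precisely $\nabla F\left(\boldsymbol{g}_t\right)$; hence the global update degenerates to the exact gradient-descent step $\boldsymbol{g}_{t+1}=\boldsymbol{g}_t-\lambda\nabla F\left(\boldsymbol{g}_t\right)$, whose $\left(1-\mu/L\right)^t$ linear rate under $\mu$-strong convexity and $L$-smoothness with $\lambda=1/L$ is the classical result already embedded in the proof of Theorem~\ref{th:1}.

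I do not expect a genuine obstacle here: the lemma is a corollary and the computation is a one-line substitution. The only point worth verifying is that the displayed factor is a legitimate contraction, i.e. $1-\frac{\mu}{L}\in(0,1)$, which follows from $0<\mu\le L$ guaranteed by the curvature sandwich \eqref{itproofas2_2}. Note that, unlike the general statement, this special case does not even invoke the geometric-sum factor $\frac{1-A^t}{1-A}$, so the convergence condition $A<1$ of Proposition~\ref{pr:1} is not needed for the bound to hold.
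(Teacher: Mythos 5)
Your proposal is correct and follows exactly the paper's own argument: substitute $a_i=1$ and $q_i\left(\boldsymbol{r}_i,P_i\right)=0$ into Theorem~\ref{th:1}, observe that the wireless-impact term vanishes and $A$ reduces to $1-\frac{\mu}{L}$, yielding \eqref{eq:lemma1}. The additional sanity check via $\boldsymbol{o}=0$ and the contraction-factor remark are consistent extras but do not change the route.
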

\begin{proof}
Since the FL algorithms do not consider the packet error rates and user selection, we have $q_i\left(\boldsymbol{r}_{i}, P_{i}\right)=0$, $a_i=1$,  $A=1-\frac{\mu}{L}$. Hence, $ \frac  {2\zeta_1}  {LK}\sum\limits_{i = 1}^U K_i  \left(1-a_i+a_iq_i\left(\boldsymbol{r}_{i}, P_{i} \right)\right) \frac{1-A^t}{1-A}=0$. Then (\ref{eq:lemma1}) can be derived based on (\ref{eq:theorem1}). 
\end{proof}
From Lemma \ref{le:1}, we can observe that, if we do not consider the packet transmission errors, the FL algorithm will converge to the optimal global FL model without any gaps. This result also corresponds to the result in the existing works (e.g., \cite{friedlander2012hybrid}). In the following section, we
show how one can leverage the result in Theorem \ref{th:1} to solve the proposed problem (\ref{eq:max}).

\section{Optimization of FL Training Loss}
In this section, our goal is to minimize the FL loss function when considering the underlying wireless network constraints. 
To solve the problem in (\ref{eq:max}), we must first simplify it.
From Theorem \ref{th:1}, we can see that, to minimize the training loss in (\ref{eq:max}), we need to only minimize the gap, $\frac  {2\zeta_1}  {LK}\sum\limits_{i = 1}^U K_i  \left(1-a_i+a_iq_i\left(\boldsymbol{r}_{i}, P_{i} \right)\right) \frac{1-A^t}{1-A}$. When $A\ge1$, the FL algorithm will not converge. In consequence, here, we only consider the minimization of the FL training loss when $A < 1$. Hence, as $t$ is large enough, which captures the asymptotic convergence behavior of FL, we have $A^t=0$. The gap can be rewritten as
 \begin{equation}\label{eq:gap}
 \begin{split}
&\frac  {2\zeta_1}  {LK}\sum\limits_{i = 1}^U K_i  \left(1-a_i+a_iq_i\left(\boldsymbol{r}_{i}, P_{i} \right)\right) \frac{1-A^t}{1-A}=\frac{\frac  {2\zeta_1}  {LK}\sum\limits_{i = 1}^U K_i  \left(1-a_i+a_iq_i\left(\boldsymbol{r}_{i}, P_{i} \right)\right) }{\frac{\mu}{L}-\frac  {4\mu\zeta_2} { {LK}}\sum\limits_{i = 1}^U K_i \left(1-a_i+a_iq_i\left(\boldsymbol{r}_{i}, P_{i} \right)\right) }.
\end{split}
  \end{equation}
  From (\ref{eq:gap}), we can observe that minimizing $\frac  {2\zeta_1}  {LK}\sum\limits_{i = 1}^U K_i  \left(1-a_i+a_iq_i\left(\boldsymbol{r}_{i}, P_{i} \right)\right) \frac{1-A^t}{1-A}$ only requires minimizing $\sum\limits_{i = 1}^U K_i  \left(1-a_i+a_iq_i\left(\boldsymbol{r}_{i}, P_{i} \right)\right) $. Meanwhile, since $a_i= \sum\limits_{n = 1}^R r_{i,n}$ and $q_i\left(\boldsymbol{r}_{i}, P_{i} \right)= \sum\limits_{n = 1}^R r_{i,n}q_{i,n}$, we have {\color{black}$q_i\left(\boldsymbol{r}_{i}, P_{i} \right)\le1$}, when $a_i=1$, and $q_i\left(\boldsymbol{r}_{i}, P_{i} \right)=0$, if $a_i=0$. In consequence, we have $a_iq_i\left(\boldsymbol{r}_{i}, P_{i} \right)=q_i\left(\boldsymbol{r}_{i}, P_{i} \right)$.
The problem in (\ref{eq:max}) can be simplified as
  \addtocounter{equation}{0}
\begin{equation}\label{eq:max1}
\begin{split}
\mathop {\min }\limits_{\boldsymbol{P}, \boldsymbol{R}}\sum\limits_{i = 1}^U K_i  \left(1-\sum\limits_{n = 1}^R r_{i,n}+q_i\left(\boldsymbol{r}_{i}, P_{i} \right)\right),
\end{split}
\end{equation}
\vspace{-0.3cm}
\begin{align}\label{c1}
\setlength{\abovedisplayskip}{-20 pt}
\setlength{\belowdisplayskip}{-20 pt}
&\!\!\!\!\!\!\!\!\rm{s.\;t.}\;\scalebox{1}{(\ref{eq:max}c)~--~(\ref{eq:max}f)}, \nonumber\\
&\scalebox{1}{$\;\;\;  r_{i,n} \in \left\{0,1\right\}, \;\;\;\;\;\forall i \in \mathcal{U}, n=1,\ldots, R,$}\tag{\theequation a}\\
&\scalebox{1}{$\;\;\; \sum\limits_{n = 1}^R r_{i,n} \le 1,\;\;\forall i \in \mathcal{U}. $} \tag{\theequation b}
\end{align} 
Next, we first find the optimal transmit power for each user given the uplink RB allocation matrix $\boldsymbol{R}$. Then, we find the uplink RB allocation to minimize the FL loss function. {\color{black}Since there always exist $\zeta_1$ and $\zeta_2$ that satisfy the constraint in (\ref{eq:pr1}) and $\|\nabla {f\left( { \boldsymbol{g}_{t} ,{\boldsymbol{x}_{ik}},{{y}_{ik}}} \right)}\|^2\leq \zeta_1+\zeta_2 \| \nabla F \left(\boldsymbol{g}_{t} \right)\|^2$, we do not add constraint (\ref{eq:pr1}) into (\ref{eq:max1}).}
\subsection{Optimal Transmit Power}

The optimal transmit power of each user $i$ can be determined by the following proposition.
  \begin{proposition}\label{theorem2}
\emph{Given the uplink RB allocation vector $\boldsymbol{r}_i$ of each user $i$, the optimal transmit power of each user $i$, $P_i^*$ is given by
\begin{equation}\label{eq:theorem2}
{\color{black}P_i^*\left(\boldsymbol{r}_i \right)=\min\left\{P_{\max}, P_{i,\gamma_\textrm{E}}\right\}},
\end{equation}
where $P_{i, \gamma_\textrm{E}}$ satisfies the equality $\varsigma \omega_i \vartheta^2 Z\left( \boldsymbol{w}_{i}\right)+\frac{P_{i, \gamma_\textrm{E}}Z\left(\boldsymbol{w}_i\right) }{c_{i}^\textrm{U} \left(\boldsymbol{r}_{i}, P_{i, \gamma_\textrm{E}}\right)    }=\gamma_\textrm{E} $.}
\end{proposition}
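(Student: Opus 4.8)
The plan is to exploit the fact that, once the RB allocation $\boldsymbol{R}$ is fixed, the simplified objective in (\ref{eq:max1}) decouples completely across users. Since each $\boldsymbol{r}_i$ is then a constant, the term $K_i\left(1-\sum_{n=1}^R r_{i,n}\right)$ contributes nothing to the minimization over $\boldsymbol{P}$, so minimizing (\ref{eq:max1}) is equivalent to minimizing $\sum_{i=1}^U K_i q_i(\boldsymbol{r}_i,P_i)$; and because $q_i(\boldsymbol{r}_i,P_i)$ depends on the power vector only through $P_i$, this in turn reduces to solving the single-user problem $\min_{P_i} q_i(\boldsymbol{r}_i,P_i)$ subject to the constraints (\ref{eq:max}c), (\ref{eq:max}d), (\ref{eq:max}f) for each $i$ separately. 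First I would isolate this scalar problem.

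The next step is to establish monotonicity of the packet error rate in $P_i$. For each realization of $h_i$, the map $P_i \mapsto 1-\exp\!\left(-m(I_n+B^\textrm{U}N_0)/(P_i h_i)\right)$ is strictly decreasing, and taking the expectation $\mathbb E_{h_i}$ preserves this, so each $q_{i,n}$ — and hence $q_i(\boldsymbol{r}_i,P_i)=\sum_{n=1}^R r_{i,n}q_{i,n}$ — is monotonically decreasing in $P_i$. Consequently the objective is minimized by driving $P_i$ as large as the constraints permit, and the proof reduces to identifying the largest feasible transmit power.

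Then I would characterize the feasible interval of $P_i$. The power budget (\ref{eq:max}f) caps $P_i \le P_{\max}$ directly. For the energy constraint (\ref{eq:max}d), the key claim is that $e_i(\boldsymbol{r}_i,P_i)=\varsigma\omega_i\vartheta^2 Z(\boldsymbol{w}_i)+P_i Z(\boldsymbol{w}_i)/c_i^\textrm{U}(\boldsymbol{r}_i,P_i)$ is strictly increasing in $P_i$. This is the step I expect to be the main obstacle, because $c_i^\textrm{U}$ is an expectation of logarithms; I would handle it by noting that $c_i^\textrm{U}(\boldsymbol{r}_i,\cdot)$ is concave in $P_i$ with $c_i^\textrm{U}(\boldsymbol{r}_i,0)=0$, so the tangent-line inequality for concave functions gives $c_i^\textrm{U}-P_i\,\partial_{P_i}c_i^\textrm{U}\ge c_i^\textrm{U}(\boldsymbol{r}_i,0)=0$; differentiating the transmission-energy term $P_i/c_i^\textrm{U}$ then shows its derivative is nonnegative, so $e_i$ increases with $P_i$. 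Monotonicity together with continuity yields a unique $P_{i,\gamma_\textrm{E}}$ solving $\varsigma\omega_i\vartheta^2 Z(\boldsymbol{w}_i)+P_{i,\gamma_\textrm{E}}Z(\boldsymbol{w}_i)/c_i^\textrm{U}(\boldsymbol{r}_i,P_{i,\gamma_\textrm{E}})=\gamma_\textrm{E}$, and (\ref{eq:max}d) is equivalent to $P_i\le P_{i,\gamma_\textrm{E}}$.

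Finally, since $l_i^\textrm{U}$ is decreasing in $P_i$ (higher power gives a higher rate $c_i^\textrm{U}$) while $l_i^\textrm{D}$ is independent of $P_i$, the delay constraint (\ref{eq:max}c) only imposes a lower bound on $P_i$ and therefore never tightens the upper endpoint of the feasible interval. The feasible set for $P_i$ thus has right endpoint $\min\{P_{\max},P_{i,\gamma_\textrm{E}}\}$, and because $q_i(\boldsymbol{r}_i,P_i)$ is decreasing, the minimum is attained precisely there, giving $P_i^*(\boldsymbol{r}_i)=\min\{P_{\max},P_{i,\gamma_\textrm{E}}\}$ as claimed (under the implicit assumption that this endpoint still satisfies the delay bound, i.e. that the single-user problem is feasible).
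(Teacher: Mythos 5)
Your proposal is correct and reaches the same conclusion by the same overall logic as the paper: the packet error rate is decreasing in $P_i$, so the optimum is the largest power admitted by the constraints, and the energy constraint together with $P_{\max}$ pins that down as $\min\{P_{\max},P_{i,\gamma_\textrm{E}}\}$. Where you genuinely diverge is in how you establish the key lemma that $e_i(\boldsymbol{r}_i,P_i)$ is increasing in $P_i$. The paper differentiates $P_i/c_i^\textrm{U}$ explicitly and reduces positivity of the derivative to the elementary inequality $(1+x)\ln(1+x)\ge x$; you instead invoke concavity of $c_i^\textrm{U}$ in $P_i$ together with $c_i^\textrm{U}(\boldsymbol{r}_i,0)=0$ and the tangent-line inequality $c_i^\textrm{U}-P_i\,\partial_{P_i}c_i^\textrm{U}\ge 0$. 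The two are equivalent in substance (the paper's inequality is exactly the tangent-line bound for $\ln(1+\kappa P)$ at the origin), but your version buys something real: concavity is preserved under the expectation $\mathbb E_{h_i}$, so your argument handles the fading-averaged rate in \eqref{eq:uplinkdatarate} directly, whereas the paper's derivative computation silently drops the expectation and works with a deterministic SINR. You also make explicit two points the paper's proof leaves implicit -- that the delay constraint (\ref{eq:max}c) only ever lower-bounds $P_i$ and hence never trims the right endpoint of the feasible interval, and that existence/uniqueness of $P_{i,\gamma_\textrm{E}}$ rests on an implicit feasibility assumption (note that $e_i$ tends to a strictly positive limit as $P_i\to 0^+$, so $P_{i,\gamma_\textrm{E}}$ need not exist for small $\gamma_\textrm{E}$; neither you nor the paper fully resolves this edge case, but you at least flag the assumption). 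The paper's contradiction-style wrap-up and your direct ``take the right endpoint of the feasible interval'' conclusion are interchangeable.
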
 
\begin{proof} See Appendix B.
\end{proof}

From Proposition \ref{theorem2}, we see that the optimal transmit power depends on the size of the local FL model $Z\left(\boldsymbol{w}_i\right)$ and the interference in each RB. In particular, as the size of the local FL model increases, each user must spend more energy for training FL model and, hence, the energy that can be used for data transmission decreases. In consequence, the training loss increases. {\color{black}Hereinafter, for simplicity, $P_i^*$ is short for $P_i^*\left(\boldsymbol{r}_i \right)$}.   

\subsection{Optimal Uplink Resource Block Allocation}
Based on Proposition \ref{theorem2} and (\ref{eq:per}), the optimization problem in (\ref{eq:max1}) can be simplified as follows
  \addtocounter{equation}{0}
\begin{equation}\label{eq:max2}
\begin{split}
\mathop {\min }\limits_{\boldsymbol{R}}\sum\limits_{i = 1}^U K_i  \left(1-\sum\limits_{n = 1}^R r_{i,n}+ \sum\limits_{n = 1}^R r_{i,n}q_{i,n} \right),
\end{split}
\end{equation}
\vspace{-0.3cm}
\begin{align}\label{c1}
\setlength{\abovedisplayskip}{-20 pt}
\setlength{\belowdisplayskip}{-20 pt}
&\!\!\!\!\!\!\!\!\rm{s.\;t.}\;\scalebox{1}{(\ref{eq:max1}a), (\ref{eq:max1}b), \textrm{and} (\ref{eq:max}e)},\nonumber\\ 
&\scalebox{1}{$\;\;\;   l_{i}^\textrm{U}\left(\boldsymbol{r}_{i}, P_{i}^*\right)+l_{i}^\textrm{D} \le \gamma_\textrm{T}  ,\;\;\forall i \in \mathcal{U}, $} \tag{\theequation a}\\
&\scalebox{1}{$\;\;\;  e_{i}\left(\boldsymbol{r}_{i},P_{i}^*\right)\le \gamma_\textrm{E} ,\;\;\forall i \in \mathcal{U}.$} \tag{\theequation b}
\end{align} 
Obviously, {\color{black}the objective function (\ref{eq:max2}) is linear, the constraints are non-linear, and the optimization variables are integers. Hence, problem (\ref{eq:max2}) can be solved by using bipartite matching algorithm \cite{mahdian2011online}}. Compared to traditional convex optimization algorithms, using bipartite matching to solve problem \eqref{eq:max2}  does not require computing the gradients of each variable nor dynamically adjusting the step size for convergence. 

To use a bipartite matching algorithm for solving problem (\ref{eq:max2}),
we first transform the optimization problem into a bipartite matching problem. We construct a bipartite graph $\mathcal{A}=\left(\mathcal{U}\times\mathcal{R}, \mathcal{E}  \right)$ where $\mathcal{R}$ is the set of RBs that can be allocated to each user, each vertex in $\mathcal{U}$ represents a user and each vertex in $\mathcal{R}$ represents an RB, and $\mathcal{E}$ is the set of edges that connect to the vertices from each set $\mathcal{U}$ and $\mathcal{R}$.  Let $\chi_{in}\in\mathcal{E}$ be the edge connecting vertex $i$ in $\mathcal{U}$ and vertex $n$ in $\mathcal{R}$ with $\chi_{in} \in \left\{0,1\right\}$, where $\chi_{in}=1$ indicates that RB $n$ is allocated to user $i$, otherwise, we have $\chi_{in}=0$. Let matching $\mathcal{T}$ be a subset of edges in $\mathcal{E}$, in which no two edges share a common vertex in $\mathcal{R}$, such that each RB $n$ can only be allocated to one user (constraint (\ref{eq:max}e) is satisfied). Nevertheless, in $\mathcal{T}$, all of the edges associated with a vertex $i \in\mathcal{U}$ will not share a common vertex $n \in \mathcal{R}$, such that each user $i$ can occupy only one RB (constraint (\ref{eq:max}b) is satisfied). The weight of edge $\chi_{in}$ is given by
\begin{equation} \label{eq:psi}
\psi_{in}\!=\!\left\{ {\begin{array}{*{20}{c}}
  {\!\!\!\!K_i\!\left(q_{i,n}\!-1\right)\!,  l_{i}^\textrm{U}\!\left({r}_{i,n}, P_{i}^*\right)\!+\!l_{i}^\textrm{D} \!\le\! \gamma_\textrm{T}\!~\textrm{and}~\!e_{i}\!\left({r}_{i,n},P_{i}^*\right)\!\le\! \!\gamma_\textrm{E} }, \\ 
  {0,\;\;\; \;\;\;\;\;\;\;\;\;\textrm{otherwise}.\;\;\;\;\;\;\;\;\;\;\;\;\;\;\;\;\;\;\;\;\;\;\;\;\;\;\;\;\;\;\;\;\;\;\;\;\;\;\;\;\;\;\;\;\;\;\;\;\;} 
\end{array}} \right.
\end{equation}
From (\ref{eq:psi}), we can see that when RB $n$ is allocated to user $i$, if the delay and energy requirements cannot be satisfied, we will have $\psi_{in}=0$, which indicates that RB $n$ will not be allocated to user $i$. The goal of this formulated bipartite matching problem is to find an optimal matching set $\mathcal{T}^*$ that can minimize the weights of the edges in $\mathcal{T}^*$. A standard Hungarian algorithm \cite{jonker1986improving} can be used to find the optimal matching set $\mathcal{T}^*$. When the optimal matching set is found, the optimal RB allocation is determined. {\color{black} When the optimal RB allocation vector $\boldsymbol{r}_i^*$ is determined, the optimal transmit power of each device can be determined by (\ref{eq:theorem2})} and the optimal user selection can be determined by $a_i^*= \sum\limits_{n = 1}^R r_{i,n}^*$. Algorithm 1 summarizes the entire process of optimizing the user selection vector $\boldsymbol{a}$, RB allocation matrix $\boldsymbol{R}$, and the transmit power vector $\boldsymbol{P}$ for training the FL algorithm.

\begin{algorithm}[t]\footnotesize
{\color{black}
\caption{{\color{black}Proposed FL Over Wireless Networks}}   
\label{algorithm}   
\begin{algorithmic}[1] 
\vspace{1pt}  
\ENSURE Data rate of each user $c_{i}^\textrm{U} \left(\boldsymbol{r}_{i}, P_{i}\right)$ and $c_{i}^\textrm{D} $, the data size of local FL model, $Z\left(\boldsymbol{w}_i\right)$, packet error rate of each user $i$, $q_i\left(\boldsymbol{r}_{i}, P_{i}\right)$. \\ 
\vspace{1pt}
\STATE  Analyze the expected convergence of the federated learning based on (\ref{eq:theorem1}).    
\vspace{1pt}  
\STATE Find the optimal transmit power of each user over each RB using (\ref{eq:theorem2}).   
\vspace{1pt}  
\STATE Solve the optimization problem (\ref{eq:max2}) using a standard Hungarian algorithm and (24).  
\vspace{1pt}
 \STATE Implement the FL algorithm using optimal RB allocation matrix $\boldsymbol{R}^*$, user selection vector $\boldsymbol{a}^*$, and transmit power vector $\boldsymbol{P}^*$. 
\vspace{1pt}
\end{algorithmic}}
\end{algorithm}

 \subsection{Implementation and Complexity}
Next, we first analyze the implementation of the Hungarian algorithm. To implement the Hungarian algorithm for finding the optimal matching set $\mathcal{T}^*$, the BS must first calculate the packet error rate $q_{i,n}$, total delay $l_{i}^\textrm{U}\left({r}_{i,n}, P_{i}^*\right)+l_{i}^\textrm{D}$, and the energy consumption $e_{i}\left({r}_{i,n},P_{i}^*\right)$ of each user transmitting the local FL model over each RB $n$. To calculate the packet error rate $q_{i,n}$ and total delay $l_{i}^\textrm{U}\left({r}_{i,n}, P_{i}^*\right)+l_{i}^\textrm{D}$, the BS must know the SINR over each RB and the data size of FL models. The BS can use channel estimation methods to learn the SINR over each RB.
The data size of the FL model depends on the learning task. To implement an FL mechanism, the BS must first send the FL model information and the learning task information to the users. In consequence, the BS will learn the data size of FL model before the execution of the FL algorithm. To calculate the energy consumption $e_{i}\left({r}_{i,n},P_{i}^*\right)$ of each user, the BS must learn each user's device information such as CPU. This device information can be learned by the BS when the users initially connect to the BS. Given the packer error rate $q_{i,n}$, total delay $l_{i}^\textrm{U}\left({r}_{i,n}, P_{i}^*\right)+l_{i}^\textrm{D}$, and the energy consumption $e_{i}\left({r}_{i,n},P_{i}^*\right)$ of each user, the BS can compute $\psi_{in}$ according to (\ref{eq:psi}). Given $\psi_{in}$, the Hungarian algorithm can be used to find the optimal matching set $\mathcal{T}^*$. Since the optimization function in (\ref{eq:max2}) is linear, it admits an optimal matching set $\mathcal{T}^*$ and the Hungarian algorithm will finally find the optimal matching set $\mathcal{T}^*$.

With regards to the complexity of the Hungarian algorithm, it must first use $UR$ iterations to calculate the packer error rate, total delay, and energy consumption of each user over each RB.  
After that, the Hungarian algorithm will update the values of $\psi_{in}$ so as to find the optimal matching set $\mathcal{T}^*$. The worst complexity of the hungarian algorithm to find the optimal matching set $\mathcal{T}^*$ is $\mathcal O\left(U^2R\right)$ \cite{HungarianMaximumMatching}. In contrast, the best complexity is $\mathcal O\left(UR\right)$.
In consequence, the major complexity lies in calculating the weight of each edge and updating the edges in the matching set $\mathcal{T}$. However, in the Hungarian algorithm, we need to only perform simple operations such as $ K_i\left(q_{i,n}-1\right)$ without calculation for the gradients of each valuables nor  adjusting the step sizes as done in the optimization algorithms.  
Meanwhile, the Hungarian algorithm is implemented by the BS in a centralized manner and the BS will have sufficient computational resources to implement it.

\section{Simulation Results and Analysis}
For our simulations, we consider a circular network area having a radius $r=500$ m with one BS at its center servicing $U = 15$ uniformly distributed users. The other parameters used in simulations are listed in Table~I. {\color{black} The FL algorithm is simulated by using the Matlab Machine Learning Toolbox for linear regression and handwritten digit identification. For linear regression, each user implements a feedforward neural network (FNN) that consists of 20 neurons. The data used to train the FL algorithm is generated randomly from $\left[0,1\right]$. The input $x$ and the output $y$ follow the function $y=-2x+1+n\times0.4$ where $n $ follows a Gaussian distribution $ \mathcal{N}\left(0,1\right)$. {\color{black}The loss function is mean squared normalized error. For handwritten digit identification, each user trains an FNN that consists of 50 neurons using the MNIST dataset \cite{MNIST}. The loss function is cross entropy loss.  
  For comparison purposes, we use three baselines: a) an FL algorithm that optimizes user selection with random resource allocation, b) an FL algorithm that randomly determines user selection and resource allocation, which can be seen as a standard FL algorithm (e.g., similar to the one in \cite{konevcny2016federated}) that is not wireless-aware, and c) a wireless optimization algorithm that minimizes the sum packet error rates of all users via optimizing user selection, transmit power while ignoring FL parameters.} Our code is available at: \url{https://github.com/mzchen0/Wireless-FL}.


\begin{table}\footnotesize
  \newcommand{\tabincell}[2]{\begin{tabular}{@{}#1@{}}#2\end{tabular}}
\renewcommand\arraystretch{1}
 \caption{
    \vspace*{-0.05em}SYSTEM PARAMETERS}\vspace*{-0.6em}
\centering  
\begin{tabular}{|c|c|c|c|}
\hline
\textbf{Parameter} & \textbf{Value} & \textbf{Parameter} & \textbf{Value} \\
\hline
$\alpha$&2  &$N_0$& -174 dBm/Hz \\
\hline
 $P_B$ & 1 W & $B^\textrm{D}$& 20 MHz  \\
\hline
{\color{black}$m$} &{\color{black} 0.023 dB} &$B^\textrm{U}$& 1 MHz \\
\hline
$ \sigma_i $ &1& $P_{\max}$ & 0.01 W \\
\hline
$\vartheta$&$10^9$& $K_i$ & [12,10,8,4,2]  \\
\hline
$\varsigma $&$10^{-27} $& $\gamma_\textrm{T}$ & 500 ms\\
\hline
$\omega_i$ & 40&$\gamma_\textrm{E}$ & 0.003 J  \\
\hline
\end{tabular}
\end{table}

\subsection{FL for Linear Regression}
\begin{figure}[!t]
  \begin{center}
   \vspace{0cm}
    \includegraphics[width=10cm]{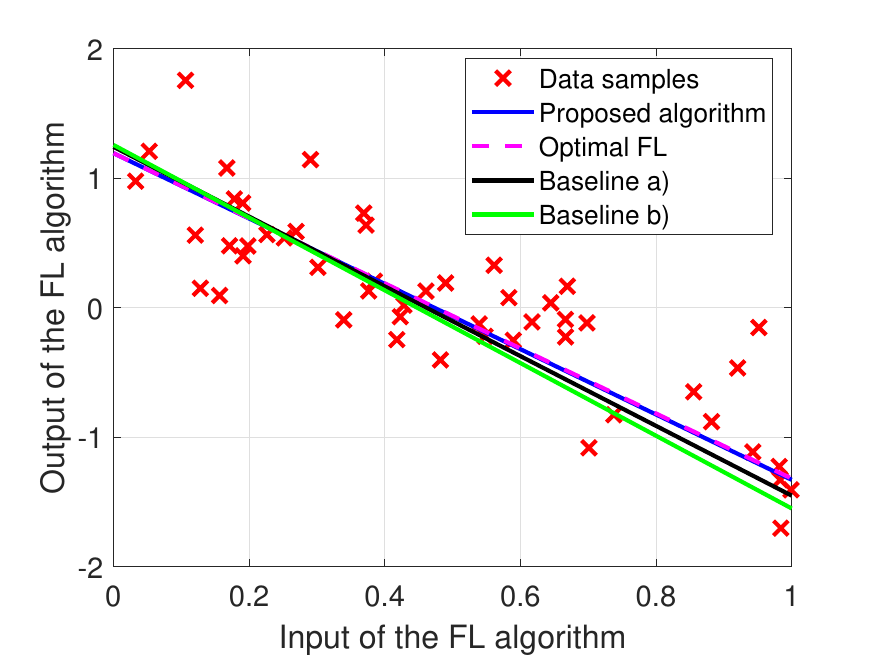}
    \vspace{-0.25cm}
    \caption{\label{fig6} An example of implementing FL for linear regression.}
  \end{center}\vspace{-0.6cm}
\end{figure}
Fig. \ref{fig6} shows an example of using FL for linear regression. In this figure, the red crosses are the data samples. In the optimal FL, the optimal RB allocation, user association, and transmit power powers are derived using a heuristic search method. From Fig. \ref{fig6}, we see that the proposed FL algorithm can fit the data samples more accurately than baselines a) and b). This is due to the fact that the proposed FL algorithm jointly considers the learning and wireless factors and, hence, it can optimize user selection and resource allocation to reduce the effect of wireless transmission errors on training FL algorithm and improve the performance of the FL algorithm. Fig. \ref{fig6} also shows that the proposed algorithm can reach the same performance as the optimal FL, which verifies that the proposed algorithm can find an optimal solution using the Hungarian algorithm.

\begin{figure}[!t]
  \begin{center}
   \vspace{0cm}
    \includegraphics[width=10cm]{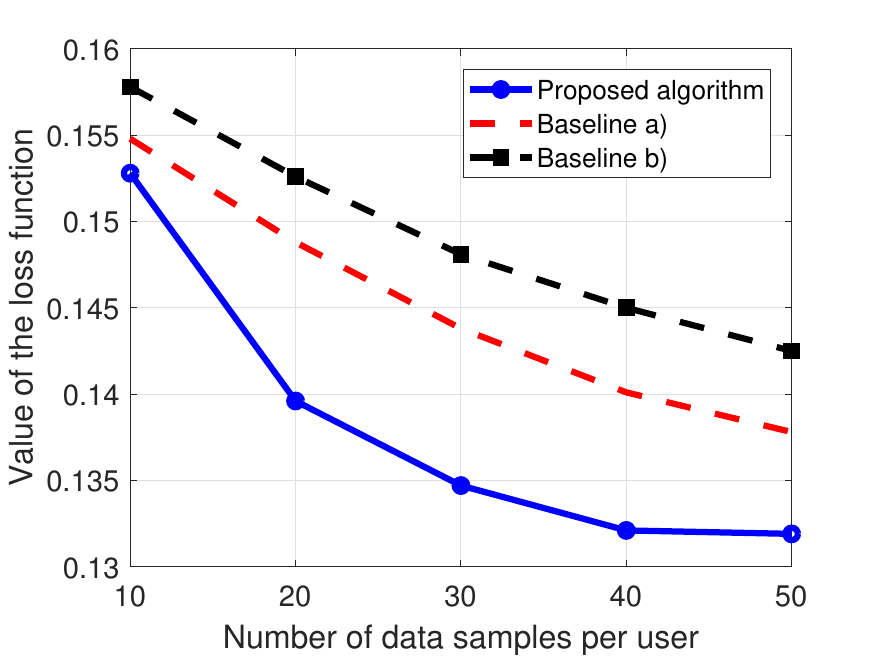}
    \vspace{-0.25cm}
    \caption{\label{fig7} Training loss as the number of data samples per user varies.}
  \end{center}\vspace{-0.6cm}
\end{figure}     

Fig. \ref{fig7} shows how the training loss changes as the number of data samples of each user varies.
From this figure, we observe that, as the number of data samples of each user increases, the values of the FL loss function of all of considered FL algorithms decrease. This is due to the fact that, as the number of data samples increases, all of the considered learning algorithms can use more data samples for training. Fig. \ref{fig7} also demonstrates that, when the number of data samples is less than 30, the training loss decreases quickly. However, as the number of data samples continues to increase, the training loss remains unchanged. This is due to the fact that as the number of data samples is over 30, the BS has enough data samples to approximate the gradient of the loss function.

%

\begin{figure}[!t]
  \begin{center}
   \vspace{0cm}
    \includegraphics[width=10cm]{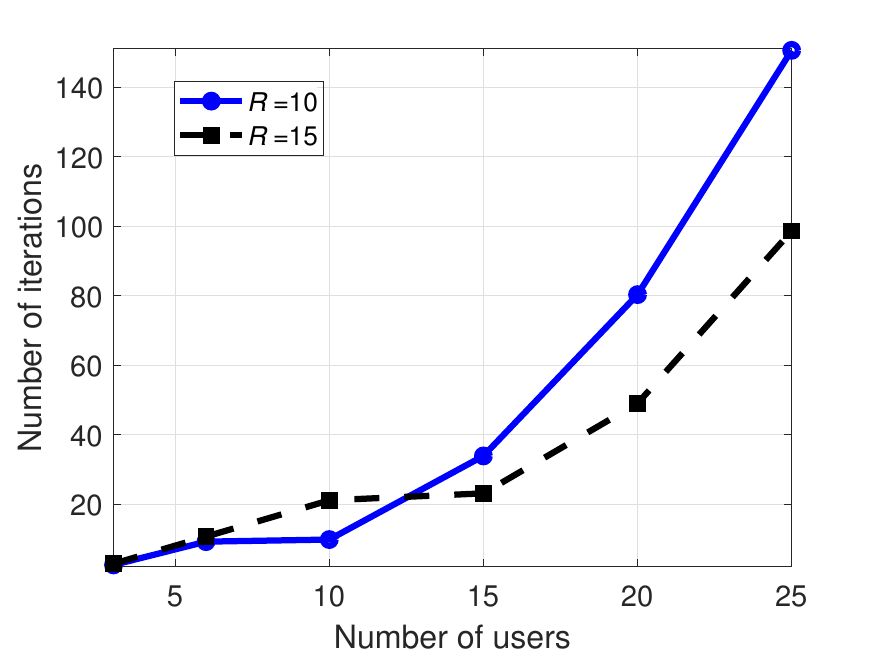}
    \vspace{-0.25cm}
 {\color{black}   \caption{\label{fig5} Number of iterations as the number of users varies.}}
  \end{center}\vspace{-0.6cm}
\end{figure}   

In Fig. \ref{fig5}, we show the number of iterations that the Hungarian algorithm needs to find the optimal RB allocation as a function of the number of users. From this figure, we can see that, as the number of users increases, the number of iterations needed to find the optimal RB allocation increases. This is because, as the number of users increases, the size of the edge weight matrix in (\ref{eq:psi}) increases and, hence, the Hungarian algorithm needs to use more iterations to find the optimal RB allocation. Fig. \ref{fig5} also shows when the number of users is smaller than the number of RBs, the number of iterations needed to find the optimal RB allocation increases slowly. However, as the number of users continues to increase, the number of iterations significantly increases. Fig. \ref{fig5} also shows that, when the number of users is larger than 10, the number of iterations needed to find the optimal RB allocation for a network with 10 RBs is larger than that of a network with 15 RBs. This is due to the fact that as the number of users is larger than 10, the gap between the number of users and the number of RBs for a network with 10 RBs is larger than that for a network with 15 RBs.

\begin{figure}[!t]
  \begin{center}
   \vspace{0cm}
    \includegraphics[width=8cm]{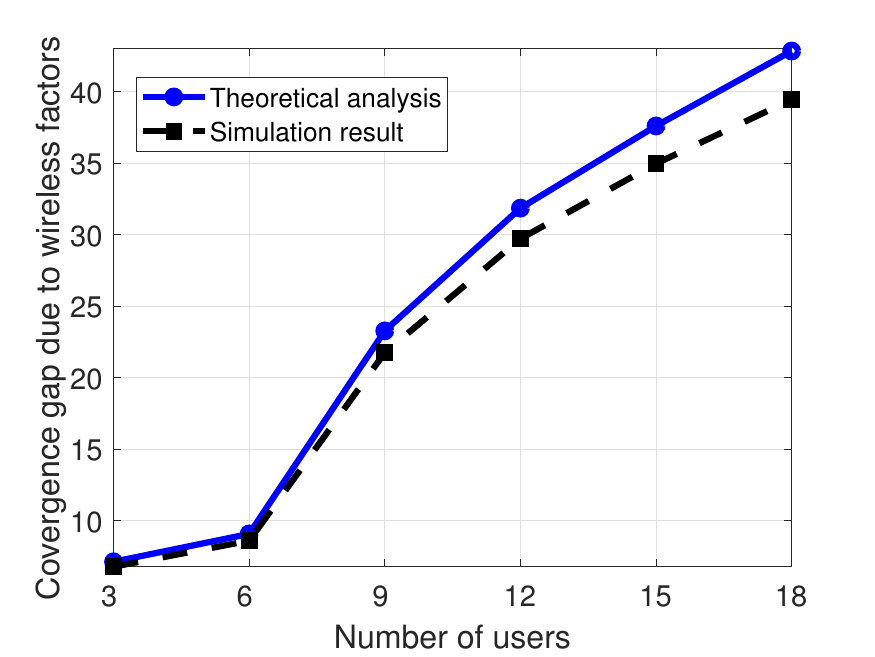}
    \vspace{-0.25cm}
 {\color{black}   \caption{\label{fig10} Convergence gap caused by wireless factors as the number of users changes.}}
  \end{center}\vspace{-0.6cm}
\end{figure}

{\color{black}Fig. \ref{fig10} shows how the convergence gap changes as the number of iterations changes. In Fig. \ref{fig10}, y-axis is the value of $\mathbb E \left( F\left(\boldsymbol{g}_{t+1}\right)-F\left(\boldsymbol{g}^{*}\right)\right)$ when the FL algorithm reaches convergence. That is, $\frac  {\zeta_1}  {2LK}\sum\limits_{i = 1}^U K_i  \left(1-a_i+a_iq_i\left(\boldsymbol{r}_{i}, P_{i} \right)\right) \frac{1}{1-A}$.
From Fig. \ref{fig10}, we can see that, the theoretical analysis derived in Theorem 1 is aligned with the simulation results with less than 9\% difference, thus corroborating the validity of Theorem 1.
From Fig. \ref{fig10}, we can also see that, as the number of users increases, the value of $\mathbb E \left( F\left(\boldsymbol{g}_{t+1}\right)-F\left(\boldsymbol{g}^{*}\right)\right)$ increases. This is because, as the number of users increases, the probability that the users cannot perform FL algorithm increases.}

%

\subsection{FL for Handwritten Digit Identification}

\begin{figure}[!t]
  \begin{center}
   \vspace{0cm}
    \includegraphics[width=10cm]{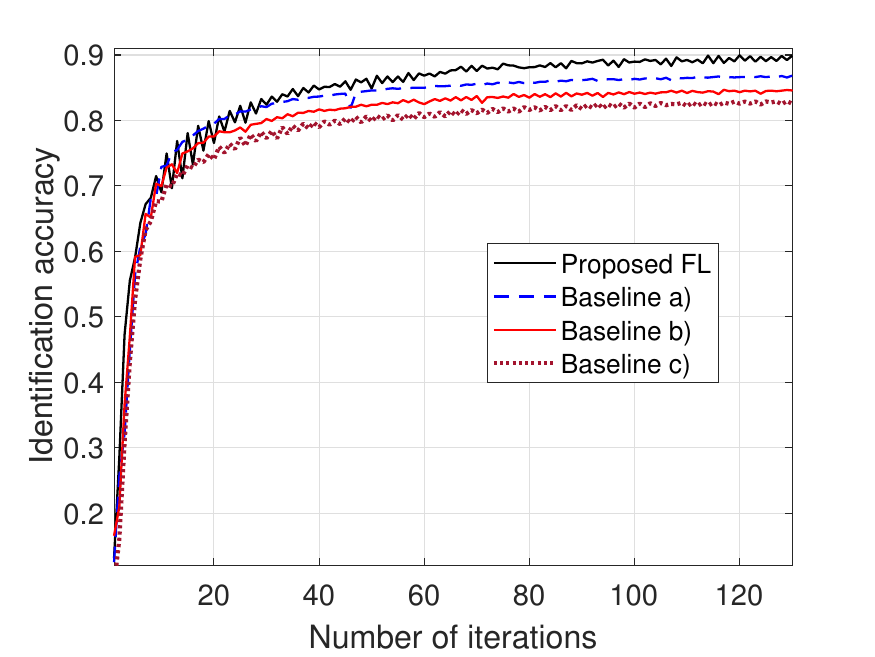}
    \vspace{-0.25cm}
   {\color{black} \caption{\label{fig4} Identification accuracy as the number of iterations varies.}}
  \end{center}\vspace{-0.6cm}
\end{figure}      

In Fig. \ref{fig4}, we show how the identification accuracy changes as the number of iterations varies.
From Fig. \ref{fig4}, we see that, as the number of iterations increases, the identification accuracy of all considered learning algorithms decreases first and, then remains unchanged. The fact that the identification accuracy remains unchanged demonstrates that the FL algorithm converges. From Fig. \ref{fig4}, we can also see that the increase speed in the value of identification accuracy is different during each iteration. This is due to the fact that the local FL models that are received by the BS may contain data errors and the BS may not be able to use them for the update of the global FL model. In consequence, at each iteration, the number of local FL models that can be used for the update of the global FL model will be different. Fig. \ref{fig4} also shows that a gap exists between the proposed algorithm and baselines a), b), and c). This gap is caused by the packet errors. 

\begin{figure}[!t]
  \begin{center}
   \vspace{0cm}
    \includegraphics[width=10cm]{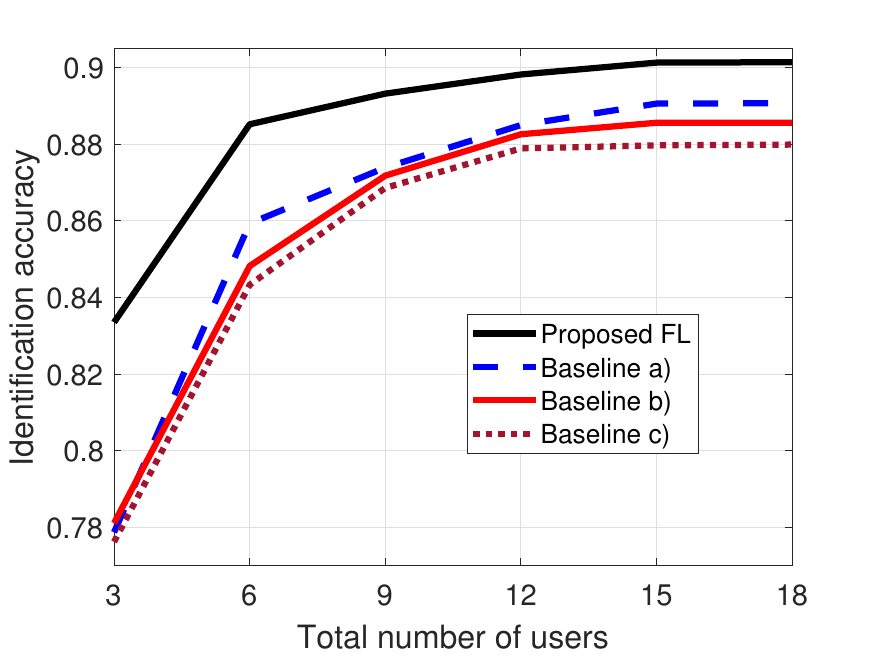}
    \vspace{-0.25cm}
 {\color{black}  \caption{\label{fig3} Identification accuracy as the total number of users varies ($R=12$).}}
  \end{center}\vspace{-0.6cm}
\end{figure}
{\color{black}Fig. \ref{fig3} shows how the identification accuracy changes as the total number of users varies. In this figure, an appropriate subset of users is selected to perform the FL algorithm. From Fig. \ref{fig3}, we can observe that, as the number of users increases, the identification accuracy increases. 
This is due to the fact that an increase in the number of users leads to more data available for the FL algorithm training and, hence, improving the accuracy of approximation of the gradient of the loss function. Fig. \ref{fig3} also shows that the proposed algorithm improves the identification accuracy by, respectively, up to {\color{black} 1.2\%, 1.7\%, and 2.3\%} compared to baselines a), b) and c) as the network consists of 18 users. 
The {\color{black} 1.2\%} improvement stems
from the fact that the proposed algorithm optimizes the resource allocation.  
The {\color{black}1.7\%} improvement stems from the fact
that the proposed algorithm joint considers learning and wireless effects and, hence, it can optimize the user selection and resource allocation to reduce the FL loss function. {\color{black} The 2.3\% improvement stems from the fact that the proposed algorithm optimizes wireless factors while considering FL parameters such as the number of training data samples.} 
Fig. \ref{fig3} also shows that when the number of users is less than 12, the value of the identification accuracy increases quickly. In contrast, as the number of users continues to increase, the identification accuracy increases slowly. This is because, for a higher number of users,
the BS will have enough data samples to accurately approximate the gradient of the loss function.} 

\begin{figure}[!t]
  \begin{center}
   \vspace{0cm}
    \includegraphics[width=10cm]{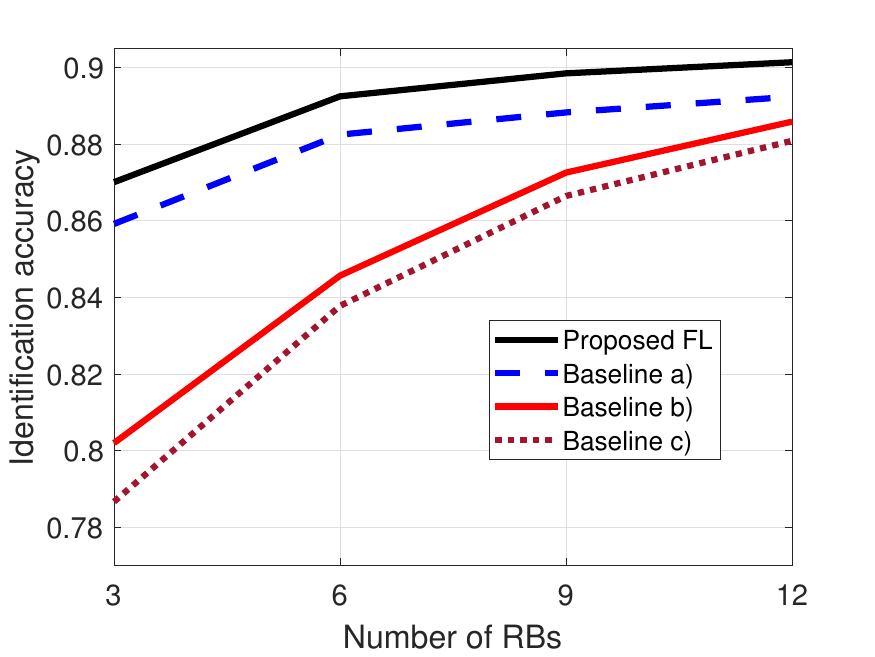}
    \vspace{-0.25cm}
    {\color{black}
   {\color{black}  \caption{\label{fig9} Identification accuracy changes as the number of RBs varies ($U=15$).}}}
  \end{center}\vspace{-0.6cm}
\end{figure}

{\color{black} Fig. \ref{fig9} shows how the identification accuracy changes as the number of RBs changes. From Fig. \ref{fig9} of the response, we can see that, as the number of RBs increases, the identification accuracy resulting from all of the considered FL algorithms increases. This is due to the fact that, as the number of RBs increases, the number of users that can perform the FL algorithm increases. From this figure, we can also see that, the proposed FL algorithm can achieve up to {\color{black} 1.4\%, 3.5\%, and 4.1\%} gains in terms of the identification accuracy compared to baselines a), b), and c) for a network with 9 RBs. This is because the proposed FL algorithm can optimize the RB allocation, transmit power, and user selection and thus minimizing the loss function values. }

\begin{figure}[!t]
  \begin{center}
   \vspace{0cm}
    \includegraphics[width=13cm]{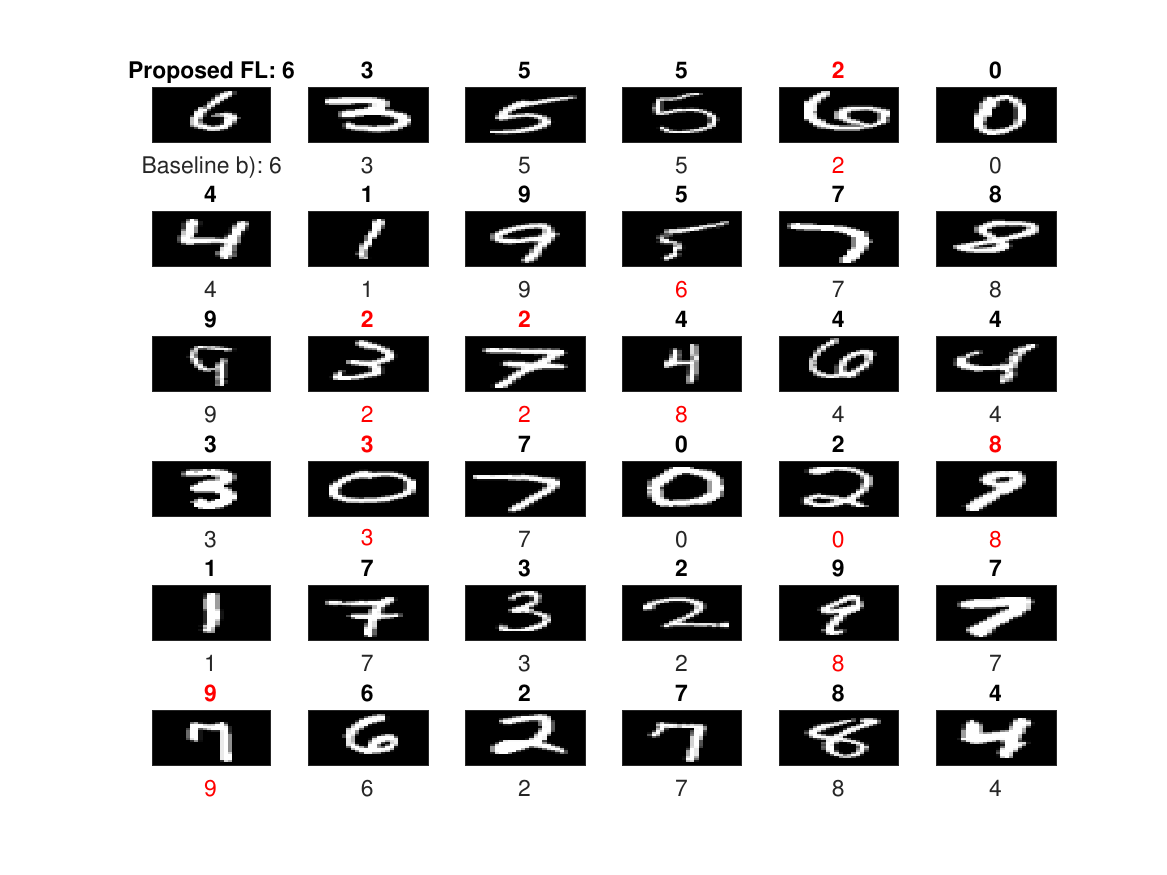}
    \vspace{-0.8cm}
{\color{black}    \caption{\label{figCNN} An example of implementing FL for handwritten digit identification.}}
  \end{center}\vspace{-0.6cm}
\end{figure}
{\color{black} Fig. \ref{figCNN} shows  one example of implementing the proposed FL algorithm for handwritten digit identification. In particular, each user trains a convolutional neural network (CNN) using the MNIST dataset. In this simulation, CNNs are generated by the Matlab Machine Learning Toolbox and each user has 2000 training data samples to train the CNN. {\color{black}From this figure, we can see that, for 36 handwritten digit identification, the proposed algorithm correctly identify 30 handwritten digits while baseline b) correctly identify 27 handwritten digits.} Hence, the proposed FL algorithm can more accurately identify the handwritten digits than baseline b). This is because the proposed FL algorithm can minimize the packet error rate of the users and hence improving the FL performance. {\color{black}From Fig. \ref{figCNN}, we can see that, even though CNNs are not convex, the proposed FL algorithm can also improve the FL performance. This is another evidence to show that our assumptions can still apply to practical FL solutions. }}

\section{Conclusion}
In this paper, we have developed a novel framework that enables the implementation of FL algorithms over wireless networks. We have formulated an optimization problem that jointly considers user selection and resource allocation for the minimization of FL training loss. To solve this problem, we have derived a closed-form expression for the expected convergence rate of the FL algorithm that considers the limitations of the wireless medium. Based on the derived expected convergence rate, the optimal transmit power is determined given the user selection and uplink RB allocation. Then, the Hungarian algorithm is used to find the optimal user selection and RB allocation so as to minimize the FL loss function.  
Simulation results have shown that the joint federated learning and communication framework yields significant improvements in the performance compared to existing implementations of the FL algorithm that does not account for the properties of the wireless channel.

  \section*{Appendix}
\subsection{Proof of Theorem \ref{th:1}}\label{Ap:a}
To prove Theorem 1, we first rewrite $F \left(\boldsymbol{g}_{t+1} \right)$ using the second-order Taylor expansion, which can be expressed as
\begin{equation}\label{itproofas3}
\begin{split}
F\left(\boldsymbol{g}_{t+1} \right) 
=&F \left(\boldsymbol{g}_{t} \right)
+\left( \boldsymbol{g}_{t+1} - \boldsymbol{g}_{t}\right)^{T}  \nabla F \left(\boldsymbol{g}_{t} \right)+\frac  {1} 2 \left( \boldsymbol{g}_{t+1} - \boldsymbol{g}_{t}\right)^T
\nabla^2 F \left(\boldsymbol{g} \right)
\left( \boldsymbol{g}_{t+1} - \boldsymbol{g}_{t}\right), \\
\leq& F\left(\boldsymbol{g}_{t} \right)
+\left( \boldsymbol{g}_{t+1} - \boldsymbol{g}_{t}\right)^{T}  \nabla F\left(\boldsymbol{g}_{t} \right)
+\frac  {L } 2 \| \boldsymbol{g}_{t+1} - \boldsymbol{g}_{t}\|^2,
\end{split}
\end{equation}
where the inequality stems from the assumption in (\ref{itproofas2_2}). 
 Given the learning rate $\lambda=\frac{1}{L}$, based on (\ref{itproofeq1}), the expected optimization function $\mathbb E \left( F\left(\boldsymbol{g}_{t+1}\right)\right)$ can be expressed as
 \begin{equation}\label{itproofas3_2}
\begin{split}
\mathbb E \left( F\left(\boldsymbol{g}_{t+1}\right)\right)
\leq 
&\mathbb E\bigg(F\left(\boldsymbol{g}_{t} \right)
-\lambda(  \nabla F\left(\boldsymbol{g}_{t} \right)-\boldsymbol{o})^{T}  \nabla F\left(\boldsymbol{g}_{t} \right)+\frac  {L \lambda^2} {2} \| \nabla F \left(\boldsymbol{g}_{t} \right)-  \boldsymbol o  \|^2\bigg), \\
\mathop = \limits^{\left( a \right)} &\mathbb E\left(F \left(\boldsymbol{g}_{t} \right)\right)-\frac{1}{2L}\|  \nabla F\left(\boldsymbol{g}_{t}\right) \|^2 +\frac{1}{2L}\mathbb E\left( \|\boldsymbol{o}\|^2 \right),
\end{split}
\end{equation}
where (a) stems from the fact that $\frac  {L \lambda^2} {2} \| \nabla F (\boldsymbol{g}_{t} )- \boldsymbol{o} \|^2=\frac  {1} {2L} \| \nabla F (\boldsymbol{g}_{t} )\|^2-\frac{1}{L}\boldsymbol o^{T} \nabla F (\boldsymbol{g}_{t} )+\frac{1}{2L}\| \boldsymbol o\|^2$. Next, we derive $\mathbb E \left(\| \boldsymbol o \|^2\right)$, which can be given as follows
{\color{black}
\begin{equation}\label{eq:o}\small
\begin{split}
\mathbb E \left(\| \boldsymbol o \|^2\right)
&=\mathbb E \left(\left\|  \nabla F (\boldsymbol{g}_{t} )-\frac{\sum\limits_{i = 1}^U \sum\limits_{k = 1}^{K_i} a_i{\nabla f\left( {\boldsymbol{g},{\boldsymbol{x}_{ik}},{{y}_{ik}}} \right)}C\left( \boldsymbol{w}_i \right)}{{\sum\limits_{i = 1}^U K_ia_iC\left( \boldsymbol{w}_i \right)}}\right \|^2\right),\\
&\!\!\!\!\!\!\!\!\!\!\!\!\!\!\!\!\!\!=\mathbb E \left(\left\|-\frac{ \left(K-{{\sum\limits_{i = 1}^U K_ia_iC\left( \boldsymbol{w}_i \right)}}  \right)\sum\limits_{i \in \mathcal{N}_1} \sum\limits_{k = 1}^{K_i} {\nabla f\left( {\boldsymbol{g},{\boldsymbol{x}_{ik}},{{y}_{ik}}} \right)}}{{K{{\sum\limits_{i = 1}^U K_ia_iC\left( \boldsymbol{w}_i \right)}}}}+\frac{\sum\limits_{i \in \mathcal{N}_2} \sum\limits_{k = 1}^{K_i} {\nabla f\left( {\boldsymbol{g},{\boldsymbol{x}_{ik}},{{y}_{ik}}} \right)}}{K} \right\|^2\right),\\
&\!\!\!\!\!\!\!\!\!\!\!\!\!\!\!\!\!\!\le\mathbb E \left(\frac{ \left(K-{{\sum\limits_{i = 1}^U K_ia_iC\left( \boldsymbol{w}_i \right)}}  \right) \sum\limits_{i \in \mathcal{N}_1} \sum\limits_{k = 1}^{K_i} \left\| {\nabla f\left( {\boldsymbol{g},{\boldsymbol{x}_{ik}},{{y}_{ik}}} \right)}\right\|}{{K{{\sum\limits_{i = 1}^U K_ia_iC\left( \boldsymbol{w}_i \right)}}}}  +\frac{\sum\limits_{i \in \mathcal{N}_2} \sum\limits_{k = 1}^{K_i} \left\|{\nabla f\left( {\boldsymbol{g},{\boldsymbol{x}_{ik}},{{y}_{ik}}} \right)}\right\| }{K}\right)^2,
\end{split}
\end{equation}
where $\mathcal{N}_1=\left\{a_i=1, C\left( \boldsymbol{w}_i \right)=1| i \in \mathcal{U} \right\}$ is the set of users that correctly transmit their local
FL models to the BS and $\mathcal{N}_2=\left\{i \in \mathcal{U}| i \notin \mathcal{N}_1 \right\}$. The inequality equation in (\ref{eq:o}) is achieved by the triangle-inequality. Since $\|\nabla {f\left( { \boldsymbol{g}_{t} ,{\boldsymbol{x}_{ik}},{{y}_{ik}}} \right)}\|\leq \sqrt{\zeta_1+\zeta_2 \| \nabla F \left(\boldsymbol{g}_{t} \right)\|^2}$, we have $\sum\limits_{i \in \mathcal{N}_1} \sum\limits_{k = 1}^{K_i} \left\| {\nabla f\left( {\boldsymbol{g},{\boldsymbol{x}_{ik}},{{y}_{ik}}} \right)}\right\|\leq \sqrt{\zeta_1+\zeta_2 \| \nabla F \left(\boldsymbol{g}_{t} \right)\|^2}\sum\limits_{i = 1}^U K_ia_iC\left( \boldsymbol{w}_i \right)$ and $\sum\limits_{i \in \mathcal{N}_2} \sum\limits_{k = 1}^{K_i} \left\| {\nabla f\left( {\boldsymbol{g},{\boldsymbol{x}_{ik}},{{y}_{ik}}} \right)}\right\|\leq \sqrt{\zeta_1+\zeta_2 \| \\ \nabla F \left(\boldsymbol{g}_{t} \right)\|^2}\times\left(K-\sum\limits_{i = 1}^U K_ia_iC\left( \boldsymbol{w}_i \right)\right)$. Hence, $\mathbb E \left(\| \boldsymbol o \|^2\right)$ can be expressed by
\begin{equation}\label{eq:o1}
\begin{split}
\mathbb E \left(\| \boldsymbol o \|^2\right)
\le\frac{4}{{K^2}} {\mathbb E\left(K-{{\sum\limits_{i = 1}^U K_ia_iC\left( \boldsymbol{w}_i \right)}}  \right)^2 \left( {\zeta_1+\zeta_2 \| \nabla F \left(\boldsymbol{g}_{t} \right)\|^2} \right) }.
\end{split}
\end{equation}
Since $K \ge K- {{\sum\limits_{i = 1}^U K_ia_iC\left( \boldsymbol{w}_i \right)}}\ge0$, we have
\begin{equation}\label{eq:o2}
\begin{split}
\mathbb E \left(\| \boldsymbol o \|^2\right)
\le \frac{4}{{K}} {\mathbb E\left(K-{{\sum\limits_{i = 1}^U K_ia_iC\left( \boldsymbol{w}_i \right)}}  \right) 
\left( {\zeta_1+\zeta_2 \| \nabla F \left(\boldsymbol{g}_{t} \right)\|^2} \right) }. 
\end{split}
\end{equation}

Since $K={{\sum\limits_{i = 1}^U K_i}}$ and $\mathbb E \left(C\left( \boldsymbol{w}_i \right)\right)=1-q_i\left(\boldsymbol{r}_{i}, P_{i} \right)$, (\ref{eq:o2}) can be simplified as follows 
\begin{equation}\label{eq:o3}
\begin{split}
\mathbb E \left(\| \boldsymbol o \|^2\right)
&= { \frac{4}{{K}} \mathbb E\left({{\sum\limits_{i = 1}^U K_i \left(1-a_iC\left( \boldsymbol{w}_i \right)\right)}}  \right)\left( {\zeta_1+\zeta_2 \| \nabla F \left(\boldsymbol{g}_{t} \right)\|^2} \right) }, \\
&= { \frac{4}{{K}}{{\sum\limits_{i = 1}^U K_i\left(1-a_i+a_iq_i\left(\boldsymbol{r}_{i}, P_{i} \right)\right)}}\left( {\zeta_1+\zeta_2 \| \nabla F \left(\boldsymbol{g}_{t} \right)\|^2} \right) }.
\end{split}
\end{equation}
Substituting \eqref{eq:o3} into (\ref{itproofas3_2}), we have
{\color{black}\begin{equation}\label{eq:EF}
\begin{split}
\mathbb E \left( F\left(\boldsymbol{g}_{t+1}\right)\right) \leq&
\mathbb E\left(F \left(\boldsymbol{g}_{t} \right) \right)+\frac  {2\zeta_1}  {LK} \sum\limits_{i = 1}^U K_i  \left(1-a_i+a_iq_i\left(\boldsymbol{r}_{i}, P_{i} \right)\right)\\
&-\frac{1}{2L} \left(1- \frac{4\zeta_2}{K}\sum\limits_{i = 1}^U K_i \left(1-a_i+a_iq_i\left(\boldsymbol{r}_{i}, P_{i} \right)\right) \right)   \| \nabla F\left(\boldsymbol{g}_{t}\right)\|^2.
\end{split}
\end{equation}}
Subtract $\mathbb E \left( F\left(\boldsymbol{g}^{*} \right) \right)$ in both sides of (\ref{eq:EF}), we have
\begin{equation}\label{eq:EF2}
\begin{split}
\mathbb E \left( F\left(\boldsymbol{g}_{t+1}\right)-F\left(\boldsymbol{g}^{*}\right)\right)  \leq& \mathbb E\left(F \left(\boldsymbol{g}_{t}\right)-F\left(\boldsymbol{g}^{*}\right)\right)+\frac  {2\zeta_1}  {LK}\sum\limits_{i = 1}^U K_i  \left(1-a_i+a_iq_i\left(\boldsymbol{r}_{i}, P_{i} \right)\right)\\
 &-\frac{1}{2L} \left(1- \frac{4\zeta_2}{K}\sum\limits_{i = 1}^U K_i  \left(1-a_i+a_iq_i\left(\boldsymbol{r}_{i}, P_{i} \right)\right) \right)   \| \nabla F\left(\boldsymbol{g}_{t}\right)\|^2.
\end{split}
\end{equation}}

{\color{black}Given (\ref{itproofas2}) and (\ref{itproofas2_2}), we have \cite{boyd2004convex}} \begin{equation} \label{itproofeq5_3}
\| \nabla F\left(\boldsymbol{g}_{t}\right)\|^2 \geq 2 \mu \left( 
F \left(\boldsymbol{g}_{t} \right)- F \left(\boldsymbol{g}^* \right) \right).
\end{equation}
{\color{black}
Substituting \eqref{itproofeq5_3} into \eqref{eq:EF2}, we have
\begin{equation}\label{eq:EF3}
\begin{split}
\mathbb E \left( F\left(\boldsymbol{g}_{t+1}\right)-F\left(\boldsymbol{g}^{*}\right)\right)  \leq& \frac  {2\zeta_1}  {LK}\sum\limits_{i = 1}^U K_i  \left(1-a_i+a_iq_i\left(\boldsymbol{r}_{i}, P_{i} \right)\right)+A \mathbb E\left(F \left(\boldsymbol{g}_{t} \right)-F\left(\boldsymbol{g}^{*}\right)\right),
\end{split}
\end{equation}
where $A=1-\frac{\mu}{L}+ \frac  {4\mu\zeta_2} { {LK}}\sum\limits_{i = 1}^U K_i \left(1-a_i+a_iq_i\left(\boldsymbol{r}_{i}, P_{i} \right)\right)$. Applying (\ref{eq:EF3}) recursively, we have
\begin{equation}\label{eq:EF4}
\begin{split}
\mathbb E \left( F\left(\boldsymbol{g}_{t+1}\right)-F\left(\boldsymbol{g}^{*}\right)\right)
 \leq& \frac  {2\zeta_1}  {LK}\sum\limits_{i = 1}^U K_i  \left(1-a_i+a_iq_i\left(\boldsymbol{r}_{i}, P_{i} \right)\right)\sum\limits_{k=0}^{t-1} A^k+A^t  \mathbb E\left(F \left(\boldsymbol{g}_{0} \right)-F\left(\boldsymbol{g}^{*}\right)\right),\\
 =&\frac  {2\zeta_1}  {LK}\sum\limits_{i = 1}^U K_i  \left(1-a_i+a_iq_i\left(\boldsymbol{r}_{i}, P_{i} \right)\right) \frac{1-A^t}{1-A}+A^t  \mathbb E\left(F \left(\boldsymbol{g}_{0} \right)-F\left(\boldsymbol{g}^{*}\right)\right).
\end{split}
\end{equation}}
This completes the proof.

\subsection{Proof of Proposition \ref{theorem2}}\label{Ap:b}

To prove Proposition \ref{theorem2}, we first prove that $e_{i}\left( \boldsymbol{r}_{i},P_{i}\right)$ is an increasing function of $P_i$.
Based on \eqref{eq:uplinkdatarate} and \eqref{eq:energy}, we have
\begin{equation}
e_{i}\left( \boldsymbol{r}_{i},P_{i}\right) =\varsigma \omega_i \vartheta^2 Z\left( \boldsymbol{X}_{i}\right)+
\frac{P_{i}}{\sum\limits_{n = 1}^R r_{i,n} B^\textrm{U}{\log _2}\left( 1 +  \kappa_{i,n} P_{i}  \right)},
\end{equation}
where $\kappa_{i,n}={\frac{{ {h_{i}}}}{\sum\limits_{i' \in \mathcal{U}_n'}{P_{i'}}h_{i'}+B^\textrm{U}N_0}}$.
The first derivative of $e_{i}\left( \boldsymbol{r}_{i},P_{i}\right)$ with respect to $P_i$ is given by
\begin{equation}
\begin{split}
&\frac{\partial e_{i}\left( \boldsymbol{r}_{i},P_{i}\right)}{ \partial P_i}=\frac{\left(\ln2\right)\sum\limits_{n = 1}^R \frac{r_{i,n}}{1 +  \kappa_{i,n} P_{i} } \left({\left(1 +  \kappa_{i,n} P_{i} \right)\ln\left(1 +  \kappa_{i,n} P_{i} \right)- \kappa_{i,n} P_{i} }\right)}
{\left(\sum\limits_{n = 1}^R r_{i,n} B^\textrm{U}{\ln}\left( 1 +  \kappa_{i,n} P_{i}  \right)\right)^2}.
\end{split}
\end{equation}
Since {\color{black}$\frac{\partial e_{i}\left( \boldsymbol{r}_{i},P_{i}\right)}{ \partial P_i}$ is always positive when $P_i>0$}, $e_{i}\left( \boldsymbol{r}_{i},P_{i}\right)$ is a monotonically increasing function when $P_i>0$.
   Contradiction is used to prove Proposition \ref{theorem2}. We assume that $P'_i$ ($P'_i\ne P_i^* $) is the optimal transmit power of user $i$. In (\ref{eq:max}d), $e_{i}\left(\boldsymbol{r}_{i}^*,P_{i, \gamma_\textrm{E}}\right)$ is a monotonically increasing function of $P_i$. Hence, as  $P'_i > P_i^* $, $e_{i}\left(\boldsymbol{r}_{i}^*,P'_{i}\right)> \gamma_\textrm{E}$, which does not meet the constraint (\ref{eq:max}f). From (\ref{eq:per}), we see that, the packer error rates decrease as the transmit power increases. Thus, as $P'_i < P_i^* $, we have $ q_{i}\left(\boldsymbol{r}_i, P^*_i\right) \leqslant  q_{i}\left(\boldsymbol{r}_i, P'_i\right)$. In consequence, as $P'_i < P_i^* $, $P'_i $ cannot minimize the function in (\ref{eq:max1}). Hence, we have $P'_i =P_i^* $. This completes the proof.

\bibliographystyle{IEEEbib}
\bibliography{references1}

\begin{thebibliography}{10}

\bibitem{chen2019FLwireless}
M.~Chen, Z.~Yang, W.~Saad, C.~Yin, H.~V. Poor, and S.~Cui,
\newblock ``Performance optimization of federated learning over wireless
  networks,''
\newblock in {\em Proc. IEEE Global Commun. Conf.}, Waikoloa, HI, USA, December
  2019.

\bibitem{chen2017machine}
M.~{Chen}, U.~{Challita}, W.~{Saad}, C.~{Yin}, and M.~{Debbah},
\newblock ``Artificial neural networks-based machine learning for wireless
  networks: {A} tutorial,''
\newblock {\em IEEE Commun. Surveys Tut.}, vol. 21, no. 4, pp. 3039--3071,
  Fourthquarter 2019.

\bibitem{sun2018application}
Y.~{Sun}, M.~{Peng}, Y.~{Zhou}, Y.~{Huang}, and S.~{Mao},
\newblock ``Application of machine learning in wireless networks: {K}ey
  techniques and open issues,''
\newblock {\em IEEE Commun. Surveys Tut.}, vol. 21, no. 4, pp. 3072--3108,
  Fourthquarter 2019.

\bibitem{liu2019machine}
Y.~{Liu}, S.~{Bi}, Z.~{Shi}, and L.~{Hanzo},
\newblock ``When machine learning meets big data: {A} wireless communication
  perspective,''
\newblock {\em IEEE Veh. Technol. Mag.}, vol. 15, no. 1, pp. 63--72, March
  2020.

\bibitem{bonawitz2019towards}
K.~Bonawitz, H.~Eichner, W.~Grieskamp, D.~Huba, A.~Ingerman, V.~Ivanov, C.~M.
  Kiddon, J.~Konecny, S.~Mazzocchi, B.~McMahan, T.~V. Overveldt, D.~Petrou,
  D.~Ramage, and J.~Roselander,
\newblock ``Towards federated learning at scale: System design,''
\newblock in {\em Proc. Systems and Machine Learning Conference}, Stanford, CA,
  USA, 2019.

\bibitem{NIPS2017_7029}
V.~Smith, C.~K. Chiang, M.~Sanjabi, and A.~S. Talwalkar,
\newblock ``Federated multi-task learning,''
\newblock in {\em Proc. Advances in Neural Information Processing Systems},
  Long Beach, CA, USA, Dec. 2017.

\bibitem{8770530}
X.~{Wang}, Y.~{Han}, C.~{Wang}, Q.~{Zhao}, X.~{Chen}, and M.~{Chen},
\newblock ``In-edge {AI}: {I}ntelligentizing mobile edge computing, caching and
  communication by federated learning,''
\newblock {\em IEEE Netw.}, vol. 33, no. 5, pp. 156--165, Sep. 2019.

\bibitem{saad2019vision}
W.~{Saad}, M.~{Bennis}, and M.~{Chen},
\newblock ``A vision of {6G} wireless systems: Applications, trends,
  technologies, and open research problems,''
\newblock {\em IEEE Netw.}, vol. 34, no. 3, pp. 134--142, May 2020.

\bibitem{jeong2019multi}
E.~Jeong, S.~Oh, J.~Park, H.~Kim, M.~Bennis, and S.~L. Kim,
\newblock ``Multi-hop federated private data augmentation with sample
  compression,''
\newblock in {\em Proc. International Joint Conference on Artificial
  Intelligence Workshop on Federated Machine Learning for User Privacy and Data
  Confidentiality}, Macao, China, Aug. 2019.

\bibitem{yang2019energy}
Z.~Yang, M.~Chen, W.~Saad, C.~S. Hong, and M.~Shikh-Bahaei,
\newblock ``Energy efficient federated learning over wireless communication
  networks,''
\newblock {\em arXiv preprint arXiv:1911.02417}, 2019.

\bibitem{li2019federated}
T.~{Li}, A.~K. {Sahu}, A.~{Talwalkar}, and V.~{Smith},
\newblock ``Federated learning: Challenges, methods, and future directions,''
\newblock {\em IEEE Signal Process. Mag.}, vol. 37, no. 3, pp. 50--60, May
  2020.

\bibitem{konevcny2016federated}
J.~Kone{\v{c}}n{\`y}, H.~B. McMahan, D.~Ramage, and P.~Richt{\'a}rik,
\newblock ``Federated optimization: {D}istributed machine learning for
  on-device intelligence,''
\newblock {\em arXiv preprint arXiv:1610.02527}, Oct. 2016.

\bibitem{mcmahan2016communication}
H.~B. McMahan, E.~Moore, D.~Ramage, S.~Hampson, and B.~A. y~Arcas,
\newblock ``Communication-efficient learning of deep networks from
  decentralized data,''
\newblock in {\em Proc. Conf. Machine Learning Research}, Fort Lauderdale, FL,
  USA, Apr. 2017.

\bibitem{chen2018federated}
M.~{Chen}, O.~{Semiari}, W.~{Saad}, X.~{Liu}, and C.~{Yin},
\newblock ``Federated echo state learning for minimizing breaks in presence in
  wireless virtual reality networks,''
\newblock {\em {IEEE} Trans. Wireless Commun.}, vol. 19, no. 1, pp. 177--191,
  Jan. 2020.

\bibitem{konevcny2015federated}
J.~Kone{\v{c}}n{\`y}, B.~McMahan, and D.~Ramage,
\newblock ``Federated optimization: {D}istributed optimization beyond the
  datacenter,''
\newblock {\em arXiv preprint arXiv:1511.03575}, Nov. 2015.

\bibitem{samarakoon2018distributed}
S.~{Samarakoon}, M.~{Bennis}, W.~{Saad}, and M.~{Debbah},
\newblock ``Distributed federated learning for ultra-reliable low-latency
  vehicular communications,''
\newblock {\em IEEE Trans. Commun.}, vol. 68, no. 2, pp. 1146--1159, Feb. 2020.

\bibitem{ha2019coded}
S.~{Ha}, J.~{Zhang}, O.~{Simeone}, and J.~{Kang},
\newblock ``Coded federated computing in wireless networks with straggling
  devices and imperfect {CSI},''
\newblock in {\em Proc. IEEE Int. Symp. Inf. Theory (ISIT)}, Paris, France,
  Jul. 2019.

\bibitem{habachi2019fast}
O.~Habachi, M.~A. Adjif, and J.~P. Cances,
\newblock ``Fast uplink grant for {NOMA}: {A} federated learning based
  approach,''
\newblock {\em arXiv preprint arXiv:1904.07975}, Mar. 2019.

\bibitem{park2018wireless}
J.~{Park}, S.~{Samarakoon}, M.~{Bennis}, and M.~{Debbah},
\newblock ``Wireless network intelligence at the edge,''
\newblock {\em Proc. IEEE}, vol. 107, no. 11, pp. 2204--2239, Nov. 2019.

\bibitem{zeng2019energy}
Q.~{Zeng}, Y.~{Du}, K.~{Huang}, and K.~K. {Leung},
\newblock ``Energy-efficient radio resource allocation for federated edge
  learning,''
\newblock in {\em Proc. IEEE Int. Conf. Commun. Workshop}, Dublin, Ireland,
  Jun. 2020.

\bibitem{8664630}
S.~{Wang}, T.~{Tuor}, T.~{Salonidis}, K.~K. {Leung}, C.~{Makaya}, T.~{He}, and
  K.~{Chan},
\newblock ``Adaptive federated learning in resource constrained edge computing
  systems,''
\newblock {\em IEEE J. Sel. Areas Commun.}, vol. 37, no. 6, pp. 1205--1221,
  June 2019.

\bibitem{9085259}
Z.~{Zhao}, C.~{Feng}, H.~H. {Yang}, and X.~{Luo},
\newblock ``Federated-learning-enabled intelligent fog radio access networks:
  {F}undamental theory, key techniques, and future trends,''
\newblock {\em IEEE Wireless Commun.}, vol. 27, no. 2, pp. 22--28, April 2020.

\bibitem{vu2019cell}
T.~T. {Vu}, D.~T. {Ngo}, N.~H. {Tran}, H.~Q. {Ngo}, M.~N. {Dao}, and R.~H.
  {Middleton},
\newblock ``Cell-free massive {MIMO} for wireless federated learning,''
\newblock {\em IEEE Tans. Wireless Commun.}, to appear, 2020.

\bibitem{8737464}
N.~H. {Tran}, W.~{Bao}, A.~{Zomaya}, N.~{Minh N.H.}, and C.~S. {Hong},
\newblock ``Federated learning over wireless networks: Optimization model
  design and analysis,''
\newblock in {\em Proc. IEEE Int. Conf. Compt. Commun.}, Paris, France, April
  2019.

\bibitem{chen2020convergence}
M.~Chen, H.~V. Poor, W.~Saad, and S.~Cui,
\newblock ``Convergence time optimization for federated learning over wireless
  networks,''
\newblock {\em arXiv preprint arXiv:2001.07845}, 2020.

\bibitem{8851249}
H.~H. {Yang}, Z.~{Liu}, T.~Q.~S. {Quek}, and H.~V. {Poor},
\newblock ``Scheduling policies for federated learning in wireless networks,''
\newblock {\em IEEE Tans. Commun.}, vol. 68, no. 1, pp. 317--333, Jan. 2020.

\bibitem{8648450}
S.~{Bi}, J.~{Lyu}, Z.~{Ding}, and R.~{Zhang},
\newblock ``Engineering radio maps for wireless resource management,''
\newblock {\em IEEE Wireless Commun.}, vol. 26, no. 2, pp. 133--141, April
  2019.

\bibitem{hennig2007some}
C.~Hennig and M.~Kutlukaya,
\newblock ``Some thoughts about the design of loss functions,''
\newblock {\em Revstat Stat. J.}, vol. 5, no. 1, pp. 19--39, March 2007.

\bibitem{45648}
J.~Konecny, H.~B. McMahan, F.~X. Yu, P.~Richtarik, A.~Theertha Suresh, and
  D.~Bacon,
\newblock ``Federated learning: {S}trategies for improving communication
  efficiency,''
\newblock in {\em Proc. NIPS Workshop on Private Multi-Party Machine Learning},
  Barcelona, Spain, Dec. 2016.

\bibitem{5703199}
Y.~{Xi}, A.~{Burr}, J.~{Wei}, and D.~{Grace},
\newblock ``A general upper bound to evaluate packet error rate over
  quasi-static fading channels,''
\newblock {\em IEEE Tans. Wireless Commun.}, vol. 10, no. 5, pp. 1373--1377,
  May 2011.

\bibitem{8057276}
Y.~Pan, C.~Pan, Z.~Yang, and M.~Chen,
\newblock ``Resource allocation for {D2D} communications underlaying a
  {NOMA}-based cellular network,''
\newblock {\em IEEE Wireless Commun. Lett.}, vol. 7, no. 1, pp. 130--133, Feb
  2018.

\bibitem{friedlander2012hybrid}
M.~P. Friedlander and M.~Schmidt,
\newblock ``Hybrid deterministic-stochastic methods for data fitting,''
\newblock {\em SIAM J. Sci. Comput.}, vol. 34, no. 3, pp. A1380--A1405, May
  2012.

\bibitem{mahdian2011online}
M.~Mahdian and Q.~Yan,
\newblock ``Online bipartite matching with random arrivals: {An} approach based
  on strongly factor-revealing {LPs},''
\newblock in {\em Proc. ACM Symposium on Theory of Computing}, San Jose,
  California, USA, June 2011.

\bibitem{jonker1986improving}
R.~Jonker and T.~Volgenant,
\newblock ``Improving the {Hungarian} assignment algorithm,''
\newblock {\em Oper. Res. Lett.}, vol. 5, no. 4, pp. 171--175, 1986.

\bibitem{HungarianMaximumMatching}
N.~Landman K.~Moore and J.~Khim,
\newblock ``Hungarian maximum matching algorithm,''
  \url{https://brilliant.org/wiki/hungarian-matching/}.

\bibitem{MNIST}
Y.~LeCun,
\newblock ``The {MNIST} database of handwritten digits,''
  \url{http://yann.lecun.com/exdb/mnist/}.

\bibitem{boyd2004convex}
S.~Boyd and L.~Vandenberghe,
\newblock {\em Convex Optimization},
\newblock Cambridge University Press, 2004.

\end{thebibliography}
\end{document}